\newcommand*{\centerfloat}{%
  \parindent \z@
  \leftskip \z@ \@plus 1fil \@minus \textwidth
  \rightskip\leftskip
  \parfillskip \z@skip}
\newcommand{\Iset}{\mathscr{I}}
\newcommand{\covfun}{\Sigma}
\renewcommand{\t}{^{\mbox{\tiny \sf T}}}
\newcommand{\addt}{\mbox{\tiny \sf T}}
\newcommand{\inv}{^{- 1}}
\renewcommand{\d}{\mathrm{d}}
\renewcommand{\Pr}{\mathbb{P}}
\newcommand{\E}{\mathbb{E}}
\renewcommand{\Re}{\mathbb{R}}
\newcommand{\Ncal}{\mathcal{N}}
\newcommand{\Amat}{\bm{A}}
\newcommand{\Bmat}{\bm{B}}
\newcommand{\Gmat}{\bm{G}}
\newcommand{\Qmat}{\bm{Q}}
\newcommand{\Rmat}{\bm{R}}
\newcommand{\Lmat}{\bm{L}}
\newcommand{\Kmat}{\bm{K}}
\newcommand{\Smat}{\bm{S}}
\newcommand{\PXmat}{\bm{P_X}}
\newcommand{\PUmat}{\bm{P_U}}
\DeclareMathOperator{\tr}{tr}
\newcommand{\Cov}{\mathrm{Cov}} 
\DeclareMathOperator{\ncdf}{cdfn}
\newtheorem{remark}{Remark}
\newtheorem{theorem}{Theorem}[section]
\newtheorem{prop}[theorem]{Proposition}
\title{Chance-Constrained Covariance Steering in a\\Gaussian Random Field\\via Successive Convex Programming}
\author{Jack Ridderhof%
\footnote{PhD Candidate, School of Aerospace Engineering.}
and Panagiotis Tsiotras%
\footnote{David and Andrew Lewis Chair and Professor, School of Aerospace Engineering, and Institute for Robotics and Intelligent Machines.}}
\affil{Georgia Institute of Technology, Atlanta, GA, 30332}
\begin{document}

\maketitle

\begin{abstract}
	The problem of optimizing affine feedback laws that explicitly steer the mean and covariance of an uncertain system state in the presence of a Gaussian random field is considered.
	Spatially-dependent disturbances are successively approximated with respect to a nominal trajectory by a sequence of jointly Gaussian random vectors.
	Sequential updates to the nominal control inputs are computed via convex optimization that includes the effect of affine state feedback, the perturbing effects of spatial disturbances, and chance constraints on the closed-loop state and control.
	The developed method is applied to solve for an affine feedback law to minimize the 99\textsuperscript{th} percentile of $\Delta v$ required to complete an aerocapture mission around a planet with a randomly disturbed atmosphere.
\end{abstract}

\section{Introduction}
\label{sec:intro}

Random disturbances acting on autonomous systems are often spatially dependent.
Examples include variations in atmospheric properties \cite{Justus2002,Ridderhof2020aeroconf}, underwater currents \cite{Lee2019currents}, and gravitational fields \cite{DeMars2015}.
The uncertain nature of these disturbances leads the system state to be a random variable with statistics determined by the system dynamics, the probabilistic structure of the disturbances, and the system control law. 
While the system dynamics and the probabilistic structure of the disturbances are fixed, it is possible to design the feedback control to desirably shape the evolution of the system probability distribution.
Indeed, for the case in which the state is Gaussian distributed, steering the state covariance by optimizing over the feedback gains has been formulated as a convex program \cite{Okamoto2018csl}.
However, the stochastic control literature is primarily concerned with systems affected by temporal disturbances, such as Brownian motion, rather than spatial disturbances.
The aim of this paper is to bridge the gap between the treatment of spatial and temporal disturbances for feedback control design, and to solve for affine feedback laws that explicitly steer the mean and covariance of the system state, subject to chance constraints, while the system is affected by spatially-dependent uncertainty.

In this paper, we model spatial uncertainty as a Gaussian random field (GRF), which can be thought of as a generalization of the Gaussian distribution to function spaces \cite{Rasmussen2006gp}.
Similarly to a Gaussian random vector, a GRF is fully characterized by a mean and a covariance function.
For any finite number of inputs (e.g., a set of position vectors), the values of the GRF are jointly Gaussian distributed with mean and covariance determined by evaluating the mean and covariance functions at the input points.

GRF models have been widely applied in the fields of spatial analysis \cite{Krige1951,Camps-Valls2016}, machine learning \cite{Rasmussen2006gp}, robotics \cite{Anderson2015,Mukadam2016,Kreuzer2018mppi}, and state estimation \cite{DeMars2015,Olson2017}.
For many applications, including the aforementioned references, GRF models are primarily used for either regression or for characterization of a 
yet-to-be-explored unknown environment.
This paper, in contrast, is concerned with using a GRF to characterize disturbances to be handled by feedback control, similar to how disturbances are treated in classical stochastic control, such as Linear Quadratic Gaussian (LQG) control.

We take as a motivating example the problem of aerocapture, which is an orbital aeroassist maneuver where a spacecraft uses a planet's atmosphere to decelerate from a hyperbolic orbit to a captured elliptical orbit around the planet \cite{Lockwood2004neptune}.
During aerocapture, the spacecraft must fly through the atmosphere of another planet, which may not be well characterized, at orbital velocity.
Descending into the lower atmosphere results in (exponentially) higher density and thus more drag, which increases the effectiveness of the maneuver --- but the perturbing effect of density variations is also much greater at the lower attitudes.
Furthermore, assuming that atmospheric density variations depend, at least partly, on the altitude, the density variations seen by the vehicle following periapsis are correlated to previously encountered variations \cite{Ridderhof2020aeroconf,Ridderhof2021entry}.
While atmospheric density uncertainty is a major driver of performance, no methodology currently exists to \textit{explicitly} treat atmospheric uncertainty for guidance and control optimization.
Rather, the state-of-the-art closed-loop predictor-corrector guidance successively treats the atmosphere as being equal to an onboard current best estimate, and performs a deterministic optimization.
The resulting guidance performance is evaluated through Monte Carlo analysis that includes spatial density variations, and guidance parameters are tuned based on the Monte Carlo results \cite{Lu2015aerocapture,Matz2020,Justus2002}.

This paper takes a sequential optimization approach to solve for both a feedforward (nominal) control and corresponding state feedback gains.
We begin with a nominal trajectory that does not account for uncertainty, and which takes the GRF to be equal to its mean value.
This trajectory may be the solution to a deterministic optimal control problem, for example.
Assuming that, in the presence of uncertainty, the trajectory will not deviate too far from its nominal value, the perturbing effect of the GRF can be approximated by the statistics of the GRF evaluated along the nominal trajectory.
In other words, the nominal trajectory serves as a mapping between time and space, which is used to reduce the spatial GRF to a temporal process.
Trajectory disturbances due to the GRF are then approximated by a sequence of jointly Gaussian random vectors, the statistics of which depend on both the structure of the GRF and the nominal trajectory.
Thus, the linearized optimal control subproblem is reduced to the more tractable situation of a linear system being affected by temporal disturbances.

For linear stochastic systems with additive Gaussian disturbances, and in absence of any state or control constraints, it is well known that the nominal control steers the state mean while the feedback gains steer the state covariance \cite{Okamoto2018csl}.
State or control constraints have to be imposed as probabilistic (e.g, chance) constraints since the system is stochastic. 
Chance constraints, however, depend on both the state mean and the covariance.
Thus, the chance-constrained optimal control of a linear stochastic system involves a joint optimization over the nominal control and the feedback gains.
This problem is referred to as chance-constrained covariance steering, since the control law is designed to explicitly steer the dynamics of the state covariance \cite{Chen2016a,Ridderhof2021minfuel}.
Previous works have shown that state history feedback laws result in a convex formulation of the chance constrained covariance steering problem \cite{Bakolas2016cs,Okamoto2018csl,Ridderhof2019nonlinear,Ridderhof2020ofcs}.
For the present problem, we may therefore jointly optimize updates to the nominal control and the feedback gains, while considering the local effect of the GRF-induced disturbances, and while enforcing the problem chance constraints.
Finally, the optimal control from each linearized subproblem is used to propagate the nominal, nonlinear dynamics to obtain the reference trajectory for the subsequent iterate.  


The contributions of this paper include: a) the derivation of discrete-time Gaussian disturbances resulting from motion through a spatially-defined GRF; 
b) the development of a successive convex programming approach to solve the resulting chance-constrained stochastic optimal control problem; and c) the application of the developed theory to the problem of aerocapture guidance.
Specifically, the existing chance-constrained stochastic optimal control literature treats discrete Gaussian disturbances as having fixed and specified statistics; in this paper, we show that the statistics of discrete-time Gaussian disturbances can be derived from the motion of a system through a GRF.
For the aerocapture problem, in particular, the approach results in both a novel analytical quantification of vehicle trajectory covariance due to density variations and a table-lookup-based guidance scheme that includes closed-loop probabilistic constraints.

This paper is organized as follows.
Properties of GRFs are briefly reviewed in Section~\ref{sec:grf}.
The stochastic optimal control problem of chance-constrained covariance steering in a GRF is introduced in Section~\ref{sec:problem_formulation}.
In Section~\ref{sec:solution_method}, a solution to this problem is developed by successive convexification.
The proposed method is first demonstrated on a simple double integrator problem in Section~\ref{sec:example_double_integrator}, and it is then applied to the aerocapure guidance problem in Section~\ref{sec:example_areocap}.
Finally, Section~\ref{sec:conclusion} summarizes the results of the paper and suggests some potential extensions and research directions.

\section{Gaussian Random Fields}
\label{sec:grf}

A collection of random variables $\{ \Psi(z) : z \in \Iset \}$ is a \textit{Gaussian random field} (GRF), also referred to as a Gaussian process, if any finite linear combination of the variables $\Psi(z_i)$ with $\{z_i\} \subset \Iset$ is Gaussian distributed  --- that is, if the variables $\Psi(z_i)$ are \textit{jointly Gaussian} \cite{LeGall2016,Rasmussen2006gp}.
In other words, each element $z$ in an index set $\Iset$ (for example, $\Iset = \Re^d$) determines a Gaussian random variable $\Psi(z)$, and, in addition, any finite collection random variables $\Psi(z_i)$ determined by the inputs $z_i$ are jointly Gaussian.
Henceforth, we will often refer to the collection $\{ \Psi(z) : z \in \Iset \}$ simply as $\Psi$ when the context is clear.
A GRF is fully characterized by a mean function
\begin{equation}
	\mu: \Iset \to \Re, \quad \mu(z) = \E\big( \Psi(z) \big),
\end{equation}
and a positive semi-definite covariance function 
\begin{equation}
	\covfun : \Iset \times \Iset \to \Re, \quad \covfun(z_1, z_2) = \Cov \big( \Psi(z_1), \Psi(z_2) \big).
\end{equation}
Thus, the values of the field $\Psi_* = \big( \Psi(z_{1}^*), \dots, \Psi(z_{n}^*) \big)$ at any $n$ input points $\{z^*_1, \dots, z^*_n\} \subset \Iset$ are Gaussian distributed as $\Psi_* \sim \Ncal (\mu_*, \Sigma_{*,*})$, where
\begin{equation}
	\mu_* = \begin{bmatrix}
		\mu(z^*_1) \\ \vdots \\ \mu(z^*_n)
	\end{bmatrix}, \qquad
	\Sigma_{*,*} = \begin{bmatrix}
		\covfun(z^*_1, z^*_1) & \cdots & \covfun(z^*_1, z^*_n) \\
		\vdots & \ddots & \vdots \\
		\covfun(z^*_n, z^*_1) & \cdots & \covfun(z^*_n, z^*_n) \\
	\end{bmatrix}.
\end{equation}

GRFs are often used in the context of conditioning based on noisy measurements.
For example, samples from one and two-dimensional GRFs, with and without conditioning on measurements, are shown in Figures~\ref{fig:gp_cond_example} and \ref{fig:field_cond}.
In this paper, however, only the aforementioned jointly Gaussian property of GRF samples will be used.
The interested reader is referred to Ref.~\cite{Rasmussen2006gp} for more details on GRFs.

\begin{figure}
	\centering
	\includegraphics[trim={.05in 0.0in 0.05in 0.0in}, clip]{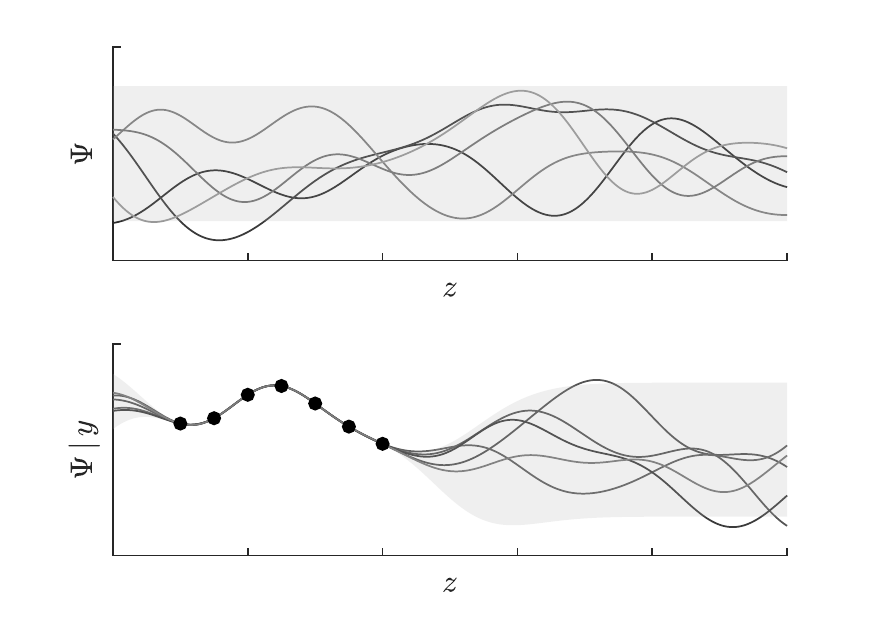}
	\caption{
		Single-dimensional GRF $\Psi$ conditioned on measurements
		\label{fig:gp_cond_example}}
\end{figure}

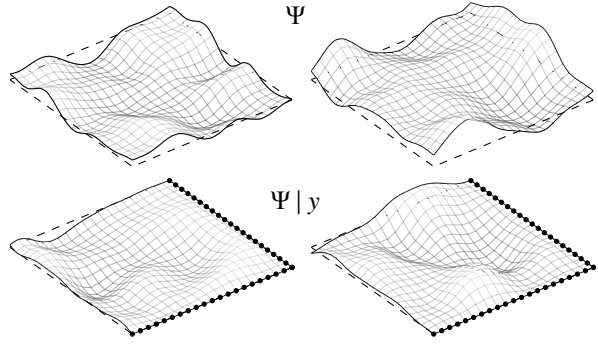
\begin{figure}
	\centering
	\tikzsetnextfilename{field_cond}
	\begin{tikzpicture}[]
    
\begin{groupplot}[
	group style={group size=2 by 2, horizontal sep=0.1in, vertical sep=-0.5in},
	height=2in,
	width=2.1in,
	hide axis,
	colormap/blackwhite,
	zmin = -0.5,
	zmax = 0.5,
	view={-37.5}{45},
	]
	\pgfplotsset{compat=1.8}
	\nextgroupplot
	\addplot3[dashed] table {
		1 0 0
		1 1 0
		0 1 0
	};
	\addplot3[
		surf,
		opacity=0.2,
		fill opacity=1.0,
		color = white,
		mesh/rows=21,mesh/ordering=y varies,
	] table {tikzsrc/data/p1_1_P.txt};
	\addplot3[smooth] table [header=false] {tikzsrc/data/p1_1_l1.txt};
	\addplot3[smooth] table [header=false] {tikzsrc/data/p1_1_l2.txt};
	\addplot3[smooth] table [header=false] {tikzsrc/data/p1_1_l3.txt};
	\addplot3[smooth] table [header=false] {tikzsrc/data/p1_1_l4.txt};
	\addplot3[dashed] table {
			1 0 0
			0 0 0
			0 1 0
	};

	\nextgroupplot
	\addplot3[dashed] table {
		1 0 0
		1 1 0
		0 1 0
	};
	\addplot3[smooth] table [header=false] {tikzsrc/data/p1_2_l1.txt};
	\addplot3[smooth] table [header=false] {tikzsrc/data/p1_2_l2.txt};
	\addplot3[smooth] table [header=false] {tikzsrc/data/p1_2_l3.txt};
	\addplot3[smooth] table [header=false] {tikzsrc/data/p1_2_l4.txt};
	\addplot3[
		surf,
		opacity=0.2,
		fill opacity=1.0,
		color = white,
		mesh/rows=21,mesh/ordering=y varies,
	] table {tikzsrc/data/p1_2_P.txt};
	\addplot3[dashed] table {
			1 0 0
			0 0 0
			0 1 0
	};
	
	\nextgroupplot
	\addplot3[dashed] table {
		1 0 0
		1 1 0
		0 1 0
	};
	\addplot3[smooth] table [header=false] {tikzsrc/data/p2_1_l1.txt};
	\addplot3[smooth] table [header=false] {tikzsrc/data/p2_1_l2.txt};
	\addplot3[smooth] table [header=false] {tikzsrc/data/p2_1_l3.txt};
	\addplot3[smooth] table [header=false] {tikzsrc/data/p2_1_l4.txt};
	\addplot3[
		surf,
		opacity=0.2,
		fill opacity=1.0,
		color = white,
		mesh/rows=21,mesh/ordering=y varies,
	] table {tikzsrc/data/p2_1_P.txt};
	\addplot3[dashed] table {
			1 0 0
			0 0 0
			0 1 0
	};
	\addplot3 [only marks, mark=*, mark color = black, fill=black, mark size = 0.8pt] table [header=false] {tikzsrc/data/p2_1_l1.txt};
	\addplot3 [only marks, mark=*, mark color = black, fill=black, mark size = 0.8pt] table [header=false] {tikzsrc/data/p2_1_l4.txt};
	
	\nextgroupplot
	\addplot3[dashed] table {
		1 0 0
		1 1 0
		0 1 0
	};
	\addplot3[smooth] table [header=false] {tikzsrc/data/p2_2_l1.txt};
	\addplot3[smooth] table [header=false] {tikzsrc/data/p2_2_l2.txt};
	\addplot3[smooth] table [header=false] {tikzsrc/data/p2_2_l3.txt};
	\addplot3[smooth] table [header=false] {tikzsrc/data/p2_2_l4.txt};
	\addplot3[
		surf,
		opacity=0.2,
		fill opacity=1.0,
		color = white,
		mesh/rows=21,mesh/ordering=y varies,
	] table {tikzsrc/data/p2_2_P.txt};
	\addplot3[dashed] table {
			1 0 0
			0 0 0
			0 1 0
	};
	\addplot3 [only marks, mark=*, mark color = black, fill=black, mark size = 0.8pt] table [header=false] {tikzsrc/data/p2_2_l1.txt};
	\addplot3 [only marks, mark=*, mark color = black, fill=black, mark size = 0.8pt] table [header=false] {tikzsrc/data/p2_2_l4.txt};
\end{groupplot}

\node[] at (3.8, 2.75) {$\Psi$};
\node[] at (3.8, 0.25) {$\Psi \, \vert \, y$};

\end{tikzpicture}
	\caption{
		Two-dimensional GRF $\Psi$ conditioned to have the right-most edges be constant
		\label{fig:field_cond}}
\end{figure}

\section{Problem Formulation}
\label{sec:problem_formulation}

Consider a system with state $x \in \Re ^ n$, and let
$
	\{ \Psi(z) \in \Re : z \in \Re ^ d \}
$
be a GRF with known mean function
$\mu : \Re ^ d \to \Re$ and known covariance function $\covfun : \Re ^ d \times \Re ^ d \to \Re$.
The independent variable $z$ of the GRF $\Psi$ is a function of the system state, given by $z = \phi(x)$.
Let the system state evolve according to
\begin{equation} \label{eq:system}
	\dot{x} = f\big(x, u, \Psi(\phi(x)) \big),
\end{equation}
with the initial condition
\begin{equation}
	x(t_0) \sim \Ncal ( \bar{x}_0, P_0 ),
\end{equation}
where $u \in \Re ^ m$ is the control input, and where the mean vector $\bar{x}_0$ and covariance matrix $P_0$ are both fixed and known.
The initial state $x_0$ is assumed to be independent of the field $\Psi$.
The evolution of the system (\ref{eq:system}) is considered on the discrete time partition
\begin{equation}
	\mathscr{P} = ( t_0, \dots, t_N ),
\end{equation}
where $t_0 < t_1 < \cdots < t_N = t_f$
for a given time horizon $N$, and such that $t_f > 0$ is a given, fixed final time.

The control is assumed to be piecewise constant on subintervals of the partition $\mathscr{P}$, so that
\begin{equation}
	u(t) = u(t_k), \quad  \forall t \in [t_k, t_{k + 1} ), \quad k = 0, \dots, N - 1.
\end{equation}
We write $x_k = x(t_k)$ and $u_k = u(t_k)$ for notational simplicity.
The control is assumed to follow the state history feedback law
\begin{equation} \label{eq:control_law_orig}
	u_k = \sum_{\ell = 0}^{k} K_{k, \ell} \tilde{x}_\ell + v_k,
\end{equation}
where $\tilde{x}_\ell = x_\ell - \bar{x}_\ell$ is the state deviation from its mean, $K_{k, \ell} \in \Re ^ {m \times n}$ are feedback gains, and where $v_k \in \Re ^ m$ are nominal controls.
As will be shown in the following sections, state history feedback results in a convex formulation of the chance-constrained covariance steering problem.
Intuitively, and in contrast to Brownian-disturbance driven processes, state history feedback is required since, due to the GRF $\Psi$, the state process may not be Markovian.
For example, if the system $x(t_2)$ returns to a previously visited state $x(t_1)$, for some $t_2 > t_1$, then the value of the state $x(t_1)$ may add information about the disturbance experienced at time $t_1$ --- a violation of the Markov assumption.

%

The state and controls are required to satisfy the chance constraints
\begin{equation} \label{eq:state_chance_constraint}
	\Pr \big( a_{i, k} \t x_k \geq \alpha_{i, k} \big) \leq p^x_{i, k}, \quad \forall (i, k) \in \mathcal{X},
\end{equation}
\begin{equation} \label{eq:control_chance_constraint}
	\Pr \big( b_{i, k} \t u_k \geq \beta_{i, k} \big) \leq p^u_{i, k}, \quad \forall (i, k) \in \mathcal{U},
\end{equation}
where the vectors $a_{i, k} \in \Re ^ n$, $b_{i, k} \in \Re ^ m$ and scalars $\alpha_{i, k}$, $\beta_{i, k}$ define half-plane constraints, and $p^x_{i, k}, p^u_{i, k} \in (0, 0.5)$ are maximum probabilities of constraint violation.
The index sets $\mathcal{X}$ and $\mathcal{U}$ determine the number of half-plane constraints to enforce at each decision time $t_k$.
Furthermore, the mean and covariance of the state at the final time $x_f = x(t_f)$ are constrained by
\begin{subequations} \label{eq:final_state_distribution_constraints}
	\begin{equation} \label{eq:final_mean_constraint}
		\E(x_f) = \bar{x}_f,
	\end{equation}
	\begin{equation}
		\Cov(x_f) \leq P_f,
	\end{equation}
\end{subequations}
for a given target mean state $\bar{x}_f$ and positive definite maximum final covariance matrix $P_f$.
Subject to the aforementioned constraints, we are concerned with finding the feedback gains $K_{k, \ell}$ and the feedforward controls $v_k$ to minimize the quadratic cost 
\begin{equation} \label{eq:J_1_def}
	J_1(K_{k, \ell}, v_k) = \E \bigg( \sum_{k = 0}^{N - 1} (x_k - x^d_k) \t Q_k (x_k - x^d_k) + \tilde{u}_k \t R_k \tilde{u}_k \bigg) +  \sum_{k = 0}^{N - 1} \bar{u}_k \t \bar{R}_k \bar{u}_k,
\end{equation}
for user-defined state and control weight matrices $Q_k \geq 0$ and $R_k, \bar{R}_k \geq 0$, and where $x_k^d$ is a given defined desired trajectory.
The cost weight is separated into $R_k$ and $\bar{R}_k$ so that, if desired, the control variance may be penalized without penalizing the nominal control.
Alternatively, the upper $1 - p_f$ percentile of a functional of the final state may be minimized by considering the cost
\begin{equation} \label{eq:J_2_def}
	J_2(K_{k, \ell}, v_k) = \inf \{ \gamma \in \Re : \Pr (\xi \t x_f > \gamma) \leq p_f \},
\end{equation}
where $\xi \in \Re ^ n$ and $p_f \in (0, 1)$ are user-defined constants.
Note that, when seeking to minimize the upper percentile cost (\ref{eq:J_2_def}), the final state mean should not be constrained, since changing the final state mean may affect the cost value.

Without loss of generality, the cost is taken to be the weighted sum
\begin{equation}
	J = J_1 + \eta J_2,
\end{equation}
for some non-negative scalar $\eta$.
Indeed, setting $Q_k$, $R_k$, and $\bar{R}_k$ to zero and $\eta = 1$ recovers the $1 - p_f$ percentile cost (\ref{eq:J_2_def}), whereas setting $\eta = 0$ results in the purely quadratic cost (\ref{eq:J_1_def}).



\section{Solution via Successive Convex Programming}
\label{sec:solution_method}

\subsection{Approximation About a Nominal Trajectory}

Assume that a nominal control input $\hat{u}$ is provided on the time interval $[t_0, t_f]$, and let the corresponding nominal state be the solution to the system
\begin{equation} \label{eq:nominal_system}
	\dot{\hat{x}} = f\big(\hat{x}, \hat{u}, \E(\Psi(\phi(\hat{x}))) \big), 
\end{equation}
with the initial value $\hat{x}(t_0) = \bar{x}_0$.
The GRF $\Psi$, its mean function $\mu$, and its covariance function $\covfun$, evaluated along the nominal trajectory $\hat{x}$, are denoted as
\begin{subequations} \label{eq:GRF_hat}
	\begin{equation}
		\hat{\Psi}(t) = \Psi\big( \phi(\hat{x}(t)) \big),
	\end{equation}
	\begin{equation}
		\hat{\mu}(t) = \mu \big( \phi( \hat{x}(t) ) \big),
	\end{equation}
	\begin{equation}
		\hat{\covfun}(t, \tau) = \covfun \big( \phi( \hat{x}(t) ), \phi( \hat{x}(\tau) ) \big).
	\end{equation}
\end{subequations}
Note that, unlike $\hat{x}$ and $\hat{u}$, the function $\hat{\Psi}$ is random: the function $\hat{\Psi}$ is an approximation of $\Psi$ in the sense that the statistics of $\hat{\Psi}$ are evaluated along the nominal trajectory rather than the perturbed trajectory.
In other words, the nominal trajectory $\hat{x}$ determines a mapping from the spatially-dependent random field $\Psi$ to the time-dependent random process $\hat{\Psi}$; this relationship is shown graphically in Figure~\ref{fig:field_diagram}.
The following result establishes the consistency of the \mbox{definitions (\ref{eq:GRF_hat}).}

\begin{figure}
	\centering
	\tikzsetnextfilename{field_diagram}
	\begin{tikzpicture}[
	declare function={f(\x,\y)=0.6 + 0.025 * (\y * sin(deg(\x)) - \x * cos(deg(\y)));}
]

\usetikzlibrary{decorations.markings}
\tikzset{
  set arrow inside/.code={\pgfqkeys{/tikz/arrow inside}{#1}},
  set arrow inside={end/.initial=>, opt/.initial=},
  /pgf/decoration/Mark/.style={
    mark/.expanded=at position #1 with
    {
      \noexpand\arrow[\pgfkeysvalueof{/tikz/arrow inside/opt}]{\pgfkeysvalueof{/tikz/arrow inside/end}}
    }
  },
  arrow inside/.style 2 args={
    set arrow inside={#1},
    postaction={
      decorate,decoration={
        markings,Mark/.list={#2}
      }
    }
  },
}

\begin{axis}[
	small,
	hide axis,
	colormap/blackwhite,
	zmin = 0,
	zmax = 0.75,
]
\addplot3[
	surf,
	domain=-3:3,
	domain y=-3:3,
	opacity=0.1,
	fill opacity=0.0,
	samples=20,
]
	{f(x,y)};

\addplot3 [arrow inside={}{0.2, 0.4, ..., 0.8}] table [x=X, y=Y, z expr={ f(\thisrow{X}, \thisrow{Y}) }] {tikzsrc/line_data.txt};
\addplot3 [arrow inside={}{0.2, 0.4, ..., 0.8}] table [x=X, y=Y, z expr={0}] {tikzsrc/line_data.txt};

\addplot3[domain=-3:3, samples=60,samples y=0] ((x, 3, {f(x,3)});
\addplot3[domain=-3:3, samples=60,samples y=0] ((x, -3, {f(x,-3)});
\addplot3[y domain=-3:3, samples=0,samples y=60] ((-3, y, {f(-3,y)});
\addplot3[y domain=-3:3, samples=0,samples y=60] ((3, y, {f(3,y)});


\end{axis}

\node[] at (0.75, 3.5) {$\Psi(z)$};
\node[] at (3.8,2.6) {$\hat{\Psi}(t)$};
\node[] at (3.25, 0.6) {$\hat{x}(t)$};

\draw [->] (1.2, 1) to[bend left=+20] node[right]{$\phi$} (1.5, 2);

\end{tikzpicture}
	\caption{
		Samples of the GRF $\Psi$ and random process $\hat{\Psi}$ along the nominal trajectory $\hat{x}$
		\label{fig:field_diagram}}
\end{figure}

\begin{prop}
	The function $\hat{\Psi}(t)$ is a Gaussian random process with mean $\hat{\mu}(t)$ and covariance $\hat{\covfun}(t, \tau)$.
\end{prop}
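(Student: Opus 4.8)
The plan is to observe that the nominal trajectory $\hat{x}$ is deterministic, so that $\hat{\Psi}$ is nothing more than the GRF $\Psi$ re-indexed by time through a fixed map, and that re-indexing a GRF by a deterministic function preserves the Gaussian property while composing the mean and covariance functions with that map.

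First I would argue that $t \mapsto \phi(\hat{x}(t))$ is a deterministic function from $[t_0, t_f]$ into the index set $\Re^d$: the nominal state $\hat{x}$ solves the ordinary differential equation \eqref{eq:nominal_system} with the deterministic initial condition $\hat{x}(t_0) = \bar{x}_0$, the deterministic input $\hat{u}$, and the field replaced by its mean; hence $\hat{x}(t)$, and therefore $z(t) := \phi(\hat{x}(t))$, is a fixed (non-random) point of $\Re^d$ for each $t$. Next, to show $\hat{\Psi}$ is a Gaussian random process, I would take an arbitrary finite set of times $\{t_1, \dots, t_n\} \subset [t_0, t_f]$ and the corresponding points $z_i = \phi(\hat{x}(t_i)) \in \Re^d$; then the random vector $(\hat{\Psi}(t_1), \dots, \hat{\Psi}(t_n)) = (\Psi(z_1), \dots, \Psi(z_n))$ is, by the defining property of the GRF $\Psi$ recalled in Section~\ref{sec:grf}, jointly Gaussian. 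Since the collection of times was arbitrary, every finite-dimensional marginal of $\hat{\Psi}$ is Gaussian, which is precisely the statement that $\hat{\Psi}$ is a Gaussian random process.

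Finally, I would identify the first two moments by linearity of expectation and the definition of $\covfun$: for any $t, \tau \in [t_0, t_f]$,
\[
	\E\big(\hat{\Psi}(t)\big) = \E\big(\Psi(\phi(\hat{x}(t)))\big) = \mu\big(\phi(\hat{x}(t))\big) = \hat{\mu}(t),
\]
and
\[
	\Cov\big(\hat{\Psi}(t), \hat{\Psi}(\tau)\big) = \Cov\big(\Psi(\phi(\hat{x}(t))), \Psi(\phi(\hat{x}(\tau)))\big) = \covfun\big(\phi(\hat{x}(t)), \phi(\hat{x}(\tau))\big) = \hat{\covfun}(t, \tau),
\]
which are exactly the definitions in \eqref{eq:GRF_hat}.

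There is no substantial obstacle in the argument; the only point requiring care is the first step — that $\hat{x}$ is genuinely deterministic, which relies on the well-posedness (existence and uniqueness of solutions) of \eqref{eq:nominal_system} together with $\phi$ being a well-defined map, both of which are standing assumptions. It is worth emphasizing in the proof \emph{why} this matters: if $\hat{x}$ depended on $\Psi$, then $\Psi \circ \phi \circ \hat{x}$ would be a nonlinear functional of the field and need not be Gaussian, which is exactly the reason the approximation is carried out about the nominal (deterministic) trajectory rather than the perturbed one.
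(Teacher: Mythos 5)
Your proof is correct and follows essentially the same route as the paper: the moment computations are identical, and your finite-dimensional-marginal argument simply spells out in detail what the paper asserts in one sentence (``$\hat{\Psi}$ is Gaussian since $\Psi$ is Gaussian''), with the useful added emphasis that this hinges on $\hat{x}$ being deterministic.
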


\begin{proof}
	The process $\hat{\Psi}$ is Gaussian since the random field $\Psi$ is Gaussian; it remains only to show that the mean and covariance of $\hat{\Psi}$ are given by $\hat{\mu}$ and $\hat{\covfun}$.
	From the definitions (\ref{eq:GRF_hat}), we obtain
	\begin{equation} \label{eq:Psihat_mean}
		\E \big( \hat{\Psi}(t) \big) = \E \big( \Psi \big( \phi( \hat{x}(t) ) \big) \big) = \mu \big( \phi( \hat{x}(t) ) \big) = \hat{\mu}(t),
	\end{equation}
	and
	\begin{align} \label{eq:Psihat_cov}
		\Cov \big( \hat{\Psi}(t), \hat{\Psi}(\tau) \big) &= \Cov \big( \Psi \big( \phi( \hat{x}(t) ) \big), \Psi \big( \phi( \hat{x}(\tau) ) \big) \big) \nonumber \\
		&= \covfun \big( \phi( \hat{x}(t) ), \phi( \hat{x}(\tau) ) \big) \nonumber \\
		&= \hat{\covfun}(t, \tau),
	\end{align}
	which yields the desired result.
\end{proof}

Next, we linearly approximate the system dynamics about the nominal trajectory and mean disturbance $\big(\hat{x}, \hat{u}, \hat{\mu} \big)$ to obtain
\begin{equation} \label{eq:linearized_system}
	\dot{x} \approx f(\hat{x}, \hat{u}, \hat{\mu}) + \frac{\partial f}{\partial x} (x - \hat{x}) + \frac{\partial f}{\partial u} (u - \hat{u}) + \frac{\partial f}{\partial \Psi} (\hat{\Psi} - \hat{\mu}).
\end{equation}
Define the functions
\begin{equation}
	A(t) = \frac{\partial f}{\partial x}, \quad B(t) = \frac{\partial f}{\partial u}, \quad G(t) = \frac{\partial f}{\partial \Psi},
\end{equation}
evaluated at $\big( \hat{x}(t), \hat{u}(t), \hat{\mu}(t) \big)$, and let
\begin{equation}
	c(t) = f(\hat{x}, \hat{u}, \hat{\mu}) - A(t) \hat{x} - B(t) \hat{u} - G(t) \hat{\mu}.
\end{equation}
The linearized system (\ref{eq:linearized_system}) is integrated from time $t_k$ to $t_{k + 1}$ to obtain the approximate system evolution
\begin{equation} \label{eq:discrete_system_integral}
	x_{k + 1} \approx \Phi(t_{k + 1}, t_k) x_k
	+ \int_{t_k}^{t_{k + 1}} \Phi(t_{k + 1}, t) \big( B(t) u_k + c(t) \big) \, \d t
	+ \int_{t_k}^{t_{k + 1}} \Phi(t_{k + 1}, t) G(t) \hat{\Psi}(t) \, \d t,
\end{equation}
where $\Phi$ is the state transition matrix corresponding to $A(t)$.
Simplifying, (\ref{eq:discrete_system_integral}) is written as the stochastic difference equation
\begin{equation} \label{eq:discrete_system}
	x_{k + 1} = A_k x_k + B_k u_k + c_k + w_k,
\end{equation}
with the values $A_k$, $B_k$ and $c_k$ taken from (\ref{eq:discrete_system_integral}), and where the Gaussian disturbance term $w_k$ is given by
\begin{equation} \label{eq:w_k_def}
	w_k = \int_{t_k}^{t_{k + 1}} \Phi(t_{k + 1}, t) G(t) \hat{\Psi}(t)  \, \d t.
\end{equation}
As shown in the following result, the mean and covariance of the disturbance term $w_k$ in (\ref{eq:discrete_system}) depend on the system dynamics and on the statistics of the GRF $\Psi$, via the functions $\hat{\mu}$ and $\hat{\covfun}$.

\begin{prop}
	The vectors $w_k$, for $k = 0, \dots, N - 1$, are jointly Gaussian with mean values
	\begin{equation} \label{eq:disturbance_mean}
		\E( w_k ) = \int_{t_k}^{t_{k + 1}}  \Phi(t_{k + 1}, t) G(t) \hat{\mu}(t) \, \d t,
	\end{equation}
	and covariances
	\begin{equation} \label{eq:disturbance_covariance}
		\Cov (w_k, w_\ell) =
		\int_{t_k}^{t_{k + 1}} \int_{t_\ell}^{t_{\ell + 1}} \Phi(t_{k + 1}, t) G(t) 
		\hat{\covfun}(t, \tau) G \t (\tau) \Phi \t (t_{\ell + 1}, \tau) \, \d \tau \, \d t.
	\end{equation}
\end{prop}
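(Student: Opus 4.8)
The plan is to treat each $w_k$, defined in (\ref{eq:w_k_def}), as a bounded linear functional of the scalar Gaussian process $\hat{\Psi}$, and to push Gaussianity, the mean, and the covariance through that linear map, reusing the preceding proposition for the statistics of $\hat{\Psi}$.

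\textbf{Joint Gaussianity.} First I would fix arbitrary vectors $\lambda_0, \dots, \lambda_{N-1} \in \Re^n$ and examine the scalar $\sum_{k=0}^{N-1} \lambda_k\t w_k = \sum_{k=0}^{N-1} \int_{t_k}^{t_{k+1}} \lambda_k\t \Phi(t_{k+1}, t) G(t) \hat{\Psi}(t) \, \d t$. On each compact subinterval the deterministic integrand $t \mapsto \lambda_k\t \Phi(t_{k+1}, t) G(t)$ is continuous (the state transition matrix depends continuously on its arguments and $G$ is continuous along the nominal trajectory), while $\hat{\Psi}$ is mean-square continuous whenever $\hat{\covfun}$ is continuous; hence the integral is the mean-square limit of its Riemann sums. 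Each Riemann sum is a finite linear combination of values $\hat{\Psi}(\cdot)$, which are jointly Gaussian by the preceding proposition, so each Riemann sum is Gaussian, and an $L^2$-limit of Gaussian scalars is Gaussian. Thus every linear combination $\sum_k \lambda_k\t w_k$ is Gaussian, which is precisely joint Gaussianity of $(w_0, \dots, w_{N-1})$.

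\textbf{Mean and covariance.} Next, from (\ref{eq:w_k_def}) I would interchange expectation and integration to get $\E(w_k) = \int_{t_k}^{t_{k+1}} \Phi(t_{k+1}, t) G(t) \, \E(\hat{\Psi}(t)) \, \d t$ and substitute $\E(\hat{\Psi}(t)) = \hat{\mu}(t)$ from the preceding proposition to recover (\ref{eq:disturbance_mean}). For the covariance I would write $\Cov(w_k, w_\ell) = \E\big( (w_k - \E w_k)(w_\ell - \E w_\ell)\t \big)$, substitute the centered integral representation of each factor, and move the expectation inside the double integral over $[t_k, t_{k+1}] \times [t_\ell, t_{\ell+1}]$. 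Since $\hat{\Psi}$ is scalar-valued, the surviving central factor is the scalar $\E\big( (\hat{\Psi}(t) - \hat{\mu}(t))(\hat{\Psi}(\tau) - \hat{\mu}(\tau)) \big) = \hat{\covfun}(t, \tau)$, once more by the preceding proposition; collecting $\Phi(t_{k+1}, t) G(t)$ on the left and $G\t(\tau) \Phi\t(t_{\ell+1}, \tau)$ on the right then gives exactly (\ref{eq:disturbance_covariance}).

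The only real obstacle is technical: making the pathwise integral defining $w_k$ meaningful and justifying that expectation commutes with the single and double integrals. The clean route is to assume the mild regularity that makes $\hat{\Psi}$ mean-square continuous (continuity of $\hat{\covfun}$), treat the $w_k$ as mean-square integrals with deterministic factors bounded on the compact subintervals, and invoke the stochastic Fubini theorem; alternatively, one can carry everything out at the level of Riemann sums and pass to the limit, sidestepping measure-theoretic machinery at the cost of a longer write-up. Beyond that, the result is pure bilinearity and bookkeeping.
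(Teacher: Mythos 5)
Your proposal is correct and follows essentially the same route as the paper: the mean and covariance computations (interchange expectation with the single and double integrals, then substitute $\hat{\mu}$ and $\hat{\covfun}$ from the preceding proposition) are exactly the paper's argument, and your joint-Gaussianity step rests on the same idea the paper invokes, namely that finite collections of GRF values are jointly Gaussian so the integrals $w_k$ inherit Gaussianity. The only difference is one of rigor: the paper asserts this last step in a single sentence, while you spell out the Riemann-sum/mean-square-limit justification and the regularity (continuity of $\hat{\covfun}$ and of the deterministic integrand) needed to make it and the Fubini interchanges precise.
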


\begin{proof}	
	The mean term (\ref{eq:disturbance_mean}) follows from taking the expectation of $w_k$ in (\ref{eq:w_k_def}) and substituting (\ref{eq:Psihat_mean}).
	Furthermore,
	\begin{equation} \label{eq:w_k_error}
		w_k - \E(w_k) = \int_{t_k}^{t_{k + 1}}  \Phi(t_{k + 1}, t) G(t) \big( \hat{\Psi}(t) - \hat{\mu}(t) \big) \, \d t.
	\end{equation}
	The covariance of $w_k$ and $w_\ell$ is computed from (\ref{eq:w_k_error}) as
	\begin{align}
		\Cov(w_k, w_\ell) &= \E \big\{ \big( w_k - \E(w_k)  \big) \big(w_\ell - \E(w_\ell)  \big) \t \big\} \nonumber \\
		&= \E \bigg\{ \int_{t_k}^{t_{k + 1}}  \Phi(t_{k + 1}, t) G(t) \big( \hat{\Psi}(t) - \hat{\mu}(t) \big) \, \d t \int_{t_\ell}^{t_{\ell + 1}} \big( \hat{\Psi}(\tau) - \hat{\mu}(\tau) \big) G \t (\tau) \Phi \t (t_{\ell + 1}, \tau)  \, \d \tau \bigg\} \nonumber \\
		&= \int_{t_k}^{t_{k + 1}} \int_{t_\ell}^{t_{\ell + 1}} \Phi(t_{k + 1}, t) G(t)
		\E \big\{ \big( \hat{\Psi}(t) - \hat{\mu}(t) \big) \big( \hat{\Psi}(\tau) - \hat{\mu}(\tau) \big) \big\}
		G \t (\tau) \Phi \t (t_{\ell + 1}, \tau)  \, \d \tau \, \d t \label{eq:w_cov_final_line}.
	\end{align}
	Substituting the covariance function $\hat{\covfun}$ from (\ref{eq:Psihat_cov}) into (\ref{eq:w_cov_final_line}), we obtain the desired result.
	Finally, by the definition of a GRF, any finite collection of evaluations of $\big( \Psi(z_i) \big)$ are jointly Gaussian, and thus integrals over $\Psi$ are also jointly Gaussian.
\end{proof}

\begin{remark}
    The system (\ref{eq:discrete_system}) is simply a stochastic difference equation with Gaussian disturbances, but the disturbances are neither (necessarily) independent nor identically distributed.
    In contrast to the problem often treated in the stochastic control literature, the statistics of the Gaussian disturbances in (\ref{eq:discrete_system}) are derived by both the nominal system motion and by the GRF statistics.
\end{remark}


\subsection{Block-Matrix Formulation}

The state process (\ref{eq:discrete_system}) may be equivalently written in block-matrix notation as \cite{Skaf2010design,Okamoto2018csl,Bakolas2016cs}
\begin{equation} \label{eq:state_matrix_form}
	\begin{bmatrix}
		x_0 \\ x_1 \\ x_2 \\ \vdots
	\end{bmatrix}
	= \begin{bmatrix}
		I \\ A_0 \\ A_1 A_0 \\ \vdots
	\end{bmatrix} x_0
	+ \begin{bmatrix}
		0 & 0 &  \\
		B_0 & 0 &  \\
		A_1 B_0 & B_1 &  \\
		&& \ddots
	\end{bmatrix} \begin{bmatrix}
		u_0 \\ u_1 \\ \vdots
	\end{bmatrix}
	+ \begin{bmatrix}
		0 \\ c_0 \\ A_1 c_1 \\ \vdots
	\end{bmatrix}
	+ \begin{bmatrix}
		0 & 0 &  \\
		I & 0 &  \\
		A_1 & I &  \\
		&& \ddots
	\end{bmatrix} \begin{bmatrix}
		w_0 \\ w_1 \\ \vdots
	\end{bmatrix}.
\end{equation}
Let $X$ be a column vector constructed by stacking the states $x_k$ for $k = 0, 1, \dots, N$, and, similarly, let $U$ and $W$ be the column vectors constructed by stacking the controls $u_k$ and disturbances $w_k$ for $k = 0, 1, \dots, N - 1$.
For appropriately constructed block matrices $\Amat$, $\Bmat$, and $\Gmat$ as in (\ref{eq:state_matrix_form}), and with $C$ an appropriately constructed vector, the state process can be written as the linear matrix equation
\begin{equation} \label{eq:discrete_block_system}
	X= \Amat x_0 + \Bmat U + C + \Gmat W.
\end{equation}
See Refs.~\cite{Okamoto2018csl,Skaf2010design,Bakolas2016cs} for details on this construction.
Letting the block lower-triangular matrix $\Kmat \in \Re^{Nm\times(N+1)n}$ be given by
\begin{equation} \label{eq:L_matrix}
	\Kmat = \begin{bmatrix}
		K_{0,0} & 0 & & \cdots & 0 \\
		K_{1, 0} & K_{1, 1} & 0 & \cdots & 0 \\
		\vdots & & & & \vdots \\
		K_{N - 1, 0} & K_{N - 1, 1} & K_{N - 1, 2} & \cdots & 0
	\end{bmatrix},
\end{equation}
and letting $U \in \Re ^{Nm}$, $V \in \Re^{N m}$, and $\tilde{X} \in \Re ^{(N+1)n}$ be the vectors obtained by stacking the closed-loop controls $(u_k)$, the feedforward controls $(v_k)$, and the state deviation $(\tilde{x}_k)$, the control law (\ref{eq:control_law_orig}) is given in block-matrix notation as
\begin{equation} \label{eq:state_feedback_block}
	U = \Kmat \tilde{X} + V.
\end{equation}
Substituting the control (\ref{eq:state_feedback_block}) into the state equation (\ref{eq:discrete_block_system}) gives the closed-loop system
\begin{equation}
	\bar{X} = \Amat \bar{x}_0 + \Bmat V + C + \Gmat \bar{W},
\end{equation}
\begin{equation}
	\tilde{X} = (I - \Bmat \Kmat) \inv (\Amat \tilde{x}_0 + \Gmat \tilde{W}).
\end{equation}
Note that the mean state $\bar{X}$ depends only on the nominal control $V$, whereas the random state deviation $\tilde{X}$ depends only on the feedback gain $\Kmat$.

Following~\cite{Skaf2010design}, we define the new decision variable $\Lmat \in \Re ^{N m \times (N + 1) n}$ as
\begin{equation} \label{eq:L_def}
	\Lmat = \Kmat (I - \Bmat \Kmat) \inv.
\end{equation}
Since $\Kmat$ is block lower-triangular and $\Bmat$ is strictly block lower-triangular, the matrix $I - \Bmat \Kmat$ is invertible.
It follows that $\Lmat$ is block lower-triangular and satisfies
\begin{equation} \label{ch2:eq:K_F_equality}
	I + \Bmat \Lmat = (I - \Bmat \Kmat) \inv,
\end{equation}
\begin{equation}
	\Kmat = \Lmat (I + \Bmat \Lmat) \inv.
\end{equation}
Therefore, we optimize over $\Lmat$ in place of $\Kmat$ \cite{Skaf2010design}.

Using the decision variable $\Lmat$ as in (\ref{eq:L_def}) results in the closed-loop system
\begin{equation} \label{eq:state_mean}
	\bar{X} = \Amat \bar{x}_0 + \Bmat V + C + \Gmat \bar{W},
\end{equation}
\begin{equation} \label{eq:state_deviation}
	\tilde{X} = (I + \Bmat \Lmat) (\Amat \tilde{x}_0 + \Gmat \tilde{W}).
\end{equation}
The state and control processes $X$ and $U$ are thus approximately, due to the linerization, Gaussian distributed with mean $\E(X) = \bar{X}$ as in (\ref{eq:state_mean}), $\E(U) = V$, and covariances
\begin{equation} \label{eq:state_covariance}
	\PXmat = \Cov(X) = (I + \Bmat \Lmat) \Smat (I + \Bmat \Lmat) \t,
\end{equation}
\begin{equation}
	\PUmat = \Cov(U) = \Lmat \Smat \Lmat \t,
\end{equation}
where
\begin{equation} \label{eq:Smat_def}
	\Smat = \Amat P_0 \Amat \t + \Gmat \Cov(W) \Gmat \t.
\end{equation}
The elements of the mean disturbance vector $\bar{W}$ and the covariance matrix $\Cov(W)$ are obtained from the integrals (\ref{eq:disturbance_mean}) and (\ref{eq:disturbance_covariance}).

\subsection{Chance Constraints} \label{sec:cvx_cc}

Consider next the state chance constraint (\ref{eq:state_chance_constraint}).
Notice that
the inner product $a_{i, k} \t x_k$ is a Gaussian random variable with mean $a_{i, k} \t \E(x_k)$ and covariance $a_{i, k} \t \Cov(x_k) a_{i, k}$.
It follows that
\begin{equation}
	\Pr (a_{i, k} \t x_k \leq \alpha_{i, k}) = \ncdf \bigg( \frac{\alpha_{i, k} - a_{i, k} \t E_k \bar{X}}{\sqrt{a_{i, k} \t E_k \PXmat E_k \t a_{i, k}}} \bigg),
\end{equation}
where $\ncdf$ is the normal cumulative distribution function.
Taking the inverse of the normal cumulative distribution function and rearranging terms, we obtain
\begin{equation}
	\Pr (a_{i, k} \t x_k > \alpha_{i, k}) \leq p^x_{i, k} \iff
		\ncdf \inv ( 1 - p^x_{i, k}) \Vert \Smat ^{1/2} (I + \Bmat \Lmat)\t E_k \t a_{i, k} \Vert
		+ a_{i, k} \t E_k (\Amat \bar{x}_0 + \Bmat V + C + \Gmat \bar{W}) \leq \alpha_{i, k},
\end{equation}
where $\Smat^{1/2}$ denotes a matrix satisfying $\Smat = (\Smat^{1/2}) \t \Smat^{1/2}$, and where $E_k \in \Re ^{n \times (N + 1) n}$ is a matrix defined such that $E_k X = x_k$.
Similarly, for the control constraints (\ref{eq:state_chance_constraint}) one obtains
\begin{equation}
	\Pr (b_{i, k} \t u_k > \beta_{i, k}) \leq p^u_{i, k} \iff
		\ncdf \inv ( 1 - p^u_{i, k}) \Vert \Smat ^{1/2} \Lmat\t E_k ^{u \addt} b_{i, k} \Vert+ b_{i, k} \t E^u_k V \leq \beta_{i, k},
\end{equation}
where $E^u_k \in \Re ^ {m \times Nm}$ such that $E^u_k U = u_k$.

\subsection{Terminal Distribution Constraints}

The final state mean constraint is given by
\begin{equation}
	\E(x_f) = E_N ( \Amat \bar{x}_0 + \Bmat V + C + \Gmat \bar{W} ) = \bar{x}_f,
\end{equation}
which is convex in the decision variable $V$.
The final state covariance constraint is given by
\begin{equation}
	E_N (I + \Bmat \Lmat) \Smat (I + \Bmat \Lmat) \t  E_N \t \leq P_f,
\end{equation}
which may be equivalently written as~\cite{Okamoto2018csl}
\begin{equation} \label{ch2:eq:convex_final_covariance}
	\Vert \Smat ^{1/2} (I + \Bmat \Lmat) \t  E_N \t P_f ^{-1/2} \Vert \leq 1,
\end{equation}
which is a convex constraint in terms of $\Lmat$.
Note that, by assumption, $P_f$ is positive definite, and hence $P_f ^{-1/2}$ exists.

\subsection{Cost Function}

The cost $J_1$ is rewritten in terms of the decision variables $\Lmat$ and $V$ as
\begin{equation}
	J_1(\Lmat, V) = \tr \big\{ \big( (I + \Bmat \Lmat) \t \Qmat (I + \Bmat \Lmat) + \Lmat \t \Rmat \Lmat \big) \Smat \big\}
	+ \Vert \Amat \bar{x}_0 + \Bmat V + C + \Gmat \bar{W} - X^d \Vert^2_{\Qmat} + V \t \bar{\Rmat} V.
\end{equation}
where $\Qmat \in \Re ^{(N + 1)n \times (N + 1) n}$ and $\Rmat \in \Re ^{Nm \times Nm}$
and $\bar{\Rmat}$
are block-diagonal matrices given by
\begin{equation}
	\Qmat = \begin{bmatrix}
		Q_0 & & \\
		& \ddots & \\
		& & Q_{N - 1} \\
		& & & 0
	\end{bmatrix}, \qquad \Rmat = \begin{bmatrix}
		R_0 & & \\
		& \ddots & \\
		& & R_{N - 1}
	\end{bmatrix},
	\qquad 
	\bar{\Rmat} = 
	\begin{bmatrix}
		\bar{R}_0 & & \\
		& \ddots & \\
		& & \bar{R}_{N - 1}
	\end{bmatrix}.
\end{equation}
Next, we consider the expression of the cost $J_2$.
Following the analysis in subsection \ref{sec:cvx_cc}, and since $\xi \t x_f$ is a Gaussian random variable, we have the relationship
\begin{equation}
	\Pr(\xi \t x_f \leq \gamma) = \ncdf \bigg( \frac{\gamma - \xi \t \E(x_f)}{\sqrt{\xi \t \Cov(x_f) \xi}} \bigg).
\end{equation}
We can thus rewrite the inequality in the cost definition (\ref{eq:J_2_def}) as
\begin{equation} \label{eq:gamma_inequality_equiv}
	\Pr(\xi \t x_f > \gamma) \leq p_f \iff
	\xi \t \E (x_f) + \sqrt{\xi \t \Cov(x_f) \xi} \ncdf\inv(1 - p_f) \leq \gamma.
\end{equation}
The minimum value $\gamma^* = J_2$ that satisfies the inequality (\ref{eq:gamma_inequality_equiv}) is obtained by setting equality in (\ref{eq:gamma_inequality_equiv}).
After substituting the decision variables $\Lmat$ and $V$ from (\ref{eq:state_mean}) and (\ref{eq:state_deviation}) into (\ref{eq:gamma_inequality_equiv}) and simplifying, we obtain the cost $J_2$ as the convex function
\begin{equation}
	J_2(\Lmat, V) = \xi \t (\Amat \bar{x}_0 + \Bmat V + C + \Gmat \bar{W})
	+ \ncdf \inv (1 - p_f) \Vert \Smat ^{1/2} (I + \Bmat \Lmat)\t E_N \t \xi \Vert.
\end{equation}

\subsection{Iterative Covariance Steering}

In the previous subsections, we have formulated the original stochastic optimal control problem as a convex optimization program with respect to a provided nominal control input.
A solution to the original, nonlinear problem can be obtained by iteratively solving the convexified problem; this procedure is, in general, referred to as successive convex programming \cite{Mao2016,Szmuk2016,Ridderhof2019nonlinear}.

First, we must introduce the following \textit{trust region} constraints that serve to restrict each successive convex problem to a domain in which the convex approximation remains valid:
\begin{equation} \label{eq:control_trust}
	\Vert \bar{u}_k - \hat{u}_k \Vert_{M^u_k} \leq \Delta^u,
\end{equation}
\begin{equation} \label{eq:state_trust}
	\Vert \bar{x}_k - \hat{x}_k \Vert_{M^x_k} \leq \Delta^x,
\end{equation}
where $M^u_k$ and $M^x_k$ are positive semi-definite weight matrices and where $\Delta^u$ and $\Delta^x$ are given deviation limits.
The subproblem to be solved, which we refer to as the covariance steering problem, is therefore given as the following convex optimization problem.

\begin{subequations} \label{eq:convex_subproblem}
	\begin{align}
		\underset{\Lmat, V}{\text{min}} \;\;
		& \tr \big\{ \big( (I + \Bmat \Lmat) \t \Qmat (I + \Bmat \Lmat) + \Lmat \t \Rmat \Lmat \big) \Smat \big\} 
			 + \Vert \bar{X} - X^d \Vert^2_{\Qmat} + V \t \bar{\Rmat} V \nonumber \\ 
			 &\; + \eta \big\{ \xi \t \bar{X}
			 + \ncdf \inv (1 - p_f) \Vert \Smat ^{1/2} (I + \Bmat \Lmat)\t E_N \t \xi \Vert \big\} \\
		\text{subject to} \;\;
		& \ncdf \inv ( 1 - p^x_{i, k}) \Vert \Smat ^{1/2} (I + \Bmat \Lmat)\t E_k \t a_{i, k} \Vert 
		 + a_{i, k} \t E_k \bar{X} \leq \alpha_{i, k}, \quad \forall (i, k) \in \mathcal{X}, \\
		& \ncdf \inv ( 1 - p^u_{i, k}) \Vert \Smat ^{1/2} \Lmat\t E_k ^{u \addt} b_{i, k} \Vert
		 + b_{i, k} \t E^u_k V\leq \beta_{i, k}, \quad \forall (i, k) \in \mathcal{U}, \\
		& E_N \bar{X} = \bar{x}_f, \\
		& \Vert \Smat ^{1/2} (I + \Bmat \Lmat) \t  E_N \t P_f ^{-1/2} \Vert \leq 1, \\
		& \Vert E^u_k V - \hat{u}_k \Vert_{M^u_k} \leq \Delta^u, \quad \forall k \in \{0, \dots, N - 1\}, \\
		& \Vert E_k \bar{X} - \hat{x}_k \Vert_{M^x_k} \leq \Delta^x, \quad \forall k \in \{1, \dots, N\},
	\end{align}
\end{subequations}
where the matrix $\Smat$ is given in (\ref{eq:Smat_def}) and  the vector $\bar{X}$, which depends on $V$, is given in (\ref{eq:state_mean}).
The resulting successive convex programming algorithm is summarized in Algorithm~\ref{algo:ics}.

\begin{algorithm}
\SetAlgoLined
\KwIn{Initial state mean and covariance $\bar{x}_0$, $P_0$, initial control guess $\hat{u}$, time partition $\mathscr{P}$}
\KwOut{Control law parameters $(K_{k, \ell})$, $(v_k)$, $(\bar{x}_k)$}
	\While{termination criteria not met}{
		Propagate nominal trajectory (\ref{eq:nominal_system})\;
		Linearize (\ref{eq:linearized_system})\;
		Discretize (\ref{eq:discrete_system_integral})\;
		Calculate disturbance statistics (\ref{eq:disturbance_mean}), (\ref{eq:disturbance_covariance})\;
		Solve convex program (\ref{eq:convex_subproblem})\;
		Set control law (\ref{eq:control_law_orig})\;
		Set new nominal control $\hat{u}_k \leftarrow \bar{u}_k$\;
		}
 \caption{Iterative Covariance Steering in a Gaussian Random Field
 \label{algo:ics}}
\end{algorithm}

\section{Numerical Examples}
\label{sec:examples}

In this section the developed theory is illustrated using two examples.
The first example is a double integrator subjected to a random position-dependent external force.
The second example treats aerocapture guidance around a planet with altitude-dependent density variations.

\subsection{Double Integrator}
\label{sec:example_double_integrator}

Consider a single-dimensional double integrator with position $r$ and velocity $v$ given as unitless values.
A GRF $\Psi(r)$ acts as an external force on the system, as a function of the position, in addition to a control force $u$.
This system is described by the equations
\begin{equation}
    \begin{bmatrix}
        \dot{r} \\ \dot{v}
    \end{bmatrix}
    =
    \begin{bmatrix}
        v \\ u + \Psi(r)
    \end{bmatrix}.
\end{equation}
The state is normally distributed at the initial time by
\begin{equation}
	\begin{bmatrix}
		r(t_0) \\ v(t_0)
	\end{bmatrix} \sim \Ncal \left( \begin{bmatrix}
		0.1 \\ 0.1
	\end{bmatrix} , \begin{bmatrix}
		\sigma_r^2 & 0 \\
		0 & \sigma_v^2
	\end{bmatrix} \right),
\end{equation}
where $3\sigma_r = 0.05$ and $3\sigma_v = 0.01$.
The force input $\Psi$ is assumed to have zero mean and locally-periodic covariance
\begin{equation}
	\covfun(r, r') = \sigma_\Psi^2
	\exp \bigg( -\frac{2 \sin ^2 \big( \pi \vert r - r' \vert / p \big) }{\ell_p^2} \bigg)
	\exp \bigg( - \frac{(r - r')^2}{2 \ell_e^2} \bigg),
\end{equation}
where $\sigma_\Psi^2 = 2 \times 10 ^ {-6}$ is the variance, $p = 0.35$ is the period, $\ell_p = 0.8$ is the periodic length scale, and $\ell_e = 1$ is the exponential-quadratic length scale.
Samples of $\Psi$ are plotted in Figure~\ref{fig:di_gp_samples}.

We consider the system over the time interval $[0, 5]$ with $\mathscr{P} = (0, 1, \dots, 5)$, and so $N = 5$.
The distribution of the state at the final time is constrained by
\begin{equation}
	\bar{x}(t_f) = \begin{bmatrix}
		0.6 \\ 0.1
	\end{bmatrix}, \quad P(t_f) \leq P_f = \begin{bmatrix}
		\sigma_r^2 & 0 \\
		0 & \sigma_v^2
	\end{bmatrix} .
\end{equation}
At each step $k$, the state is constrained to lie in the region between two lines passing through the point $(0.7, 0.1)$ having slopes $\pm 0.05 / 0.1$, which is shown by dashed lines in Figure~\ref{fig:di_pos_cmp}, with a probability of at least $0.9973$.
Translating into the format (\ref{eq:state_chance_constraint}), and leveraging the subadditivity of probability, this chance constraint is represented by 
\begin{equation}
	a_{1, k} = \begin{bmatrix}
		0.212766 \\ 8.51064
	\end{bmatrix}, \quad a_{2, k} = \begin{bmatrix}
		0.30303 \\ -12.1212
	\end{bmatrix},
\end{equation}
and $\alpha_{1, k} = 1$, $\alpha_{2, k} = -1$, and $p^x_{i, k} = (1 - 0.9973) / 2$ for $i = 1, 2$.
The running control weight is $R_k = \bar{R}_k = 1$ and the state weight $Q_k$ is zero.
We only consider the quadratic cost $J_1$, and therefore we set $\eta = 0$.

Algorithm~\ref{algo:ics} was run for a single iteration from an initial guess of zero control, without explicit termination criteria, and the resulting open and closed-loop trajectories are shown in Figure~\ref{fig:di_pos_cmp}.
While for the closed-loop trajectory the confidence ellipses are not entirely within the constrained region, the chance constraints were satisfied based on 5,000 Monte Carlo trials.




\begin{figure}
	\centering
	\includegraphics[trim={.0in 0.0in 0.1in 0.0in}, clip]{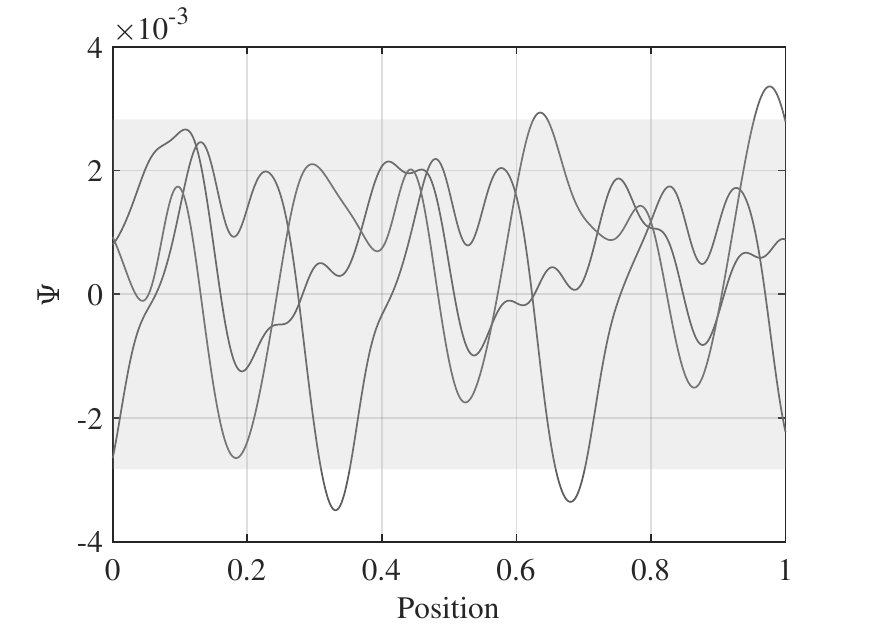}
	\caption{
		Samples of $\Psi$ with shaded $2\sigma$ confidence interval
		\label{fig:di_gp_samples}}
\end{figure}

\begin{figure}
	\centering
	\includegraphics[trim={.0in 0.1in 0.1in 0.2in}, clip]{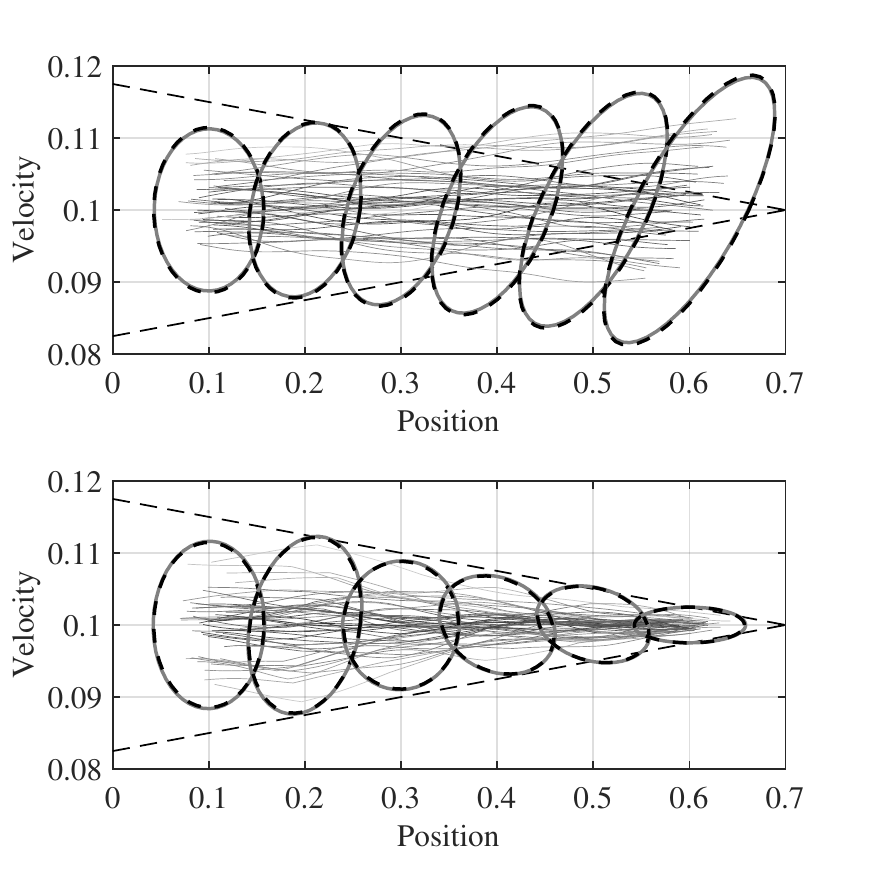}
	\caption{
		Open-loop (top) and closed-loop (bottom) trajectories of the double integrator system with 99.73\% confidence ellipses computed from linear covariance (black, dashed) and 5,000 trial Monte Carlo (gray, solid).
		\label{fig:di_pos_cmp}}
\end{figure}

\subsection{Aerocapture}
\label{sec:example_areocap}

In this subsection, we apply Algorithm~\ref{algo:ics} to the problem of aerocapture guidance, which was briefly described in Section~\ref{sec:intro}.
First, we review and motivate the aerocapture problem.

Concept studies have shown that using aerocapture in place of an all-propulsive system 
can have significant benefits in many future interplanetary space missions.
For instance, aerocapture 
can increase the delivered mass to a science orbit around Neptune by 1.4 times \cite{Lockwood2006technote,Masciarelli2004aerocapture,Lockwood2004neptune}, can decrease the required launch mass for a Mars robotic mission by 3-4 times \cite{Wright2006mars}, and can decrease the required mass for a Titan robotic mission by between 40 and 80\% \cite{Lockwood2003titan,Lockwood2006_titan_technote}.
While recent works have studied open-loop aerocapture with parametric uncertainty \cite{Heidrich2020aerocapture} and with density uncertainty modeled as a GRF \cite{Ridderhof2020aeroconf,Albert2021}, treating closed-loop aerocapture with uncertainty remains an open problem.

\subsubsection{Mission Design}

The aerocapture mission profile is shown in Figure~\ref{fig:aerocapture_mission}.
Following atmospheric flight, the vehicle will perform a periapsis raising burn (to raise the periapsis out of the planet's atmosphere) followed by an apoapsis clean up burn.
Both the final orbit and the $\Delta v$ cost are determined by the target periapsis and target apoapsis; the $\Delta v$ cost is also determined by the vehicle state following the atmospheric flight segment.

\begin{figure}
	\centering
	\includegraphics{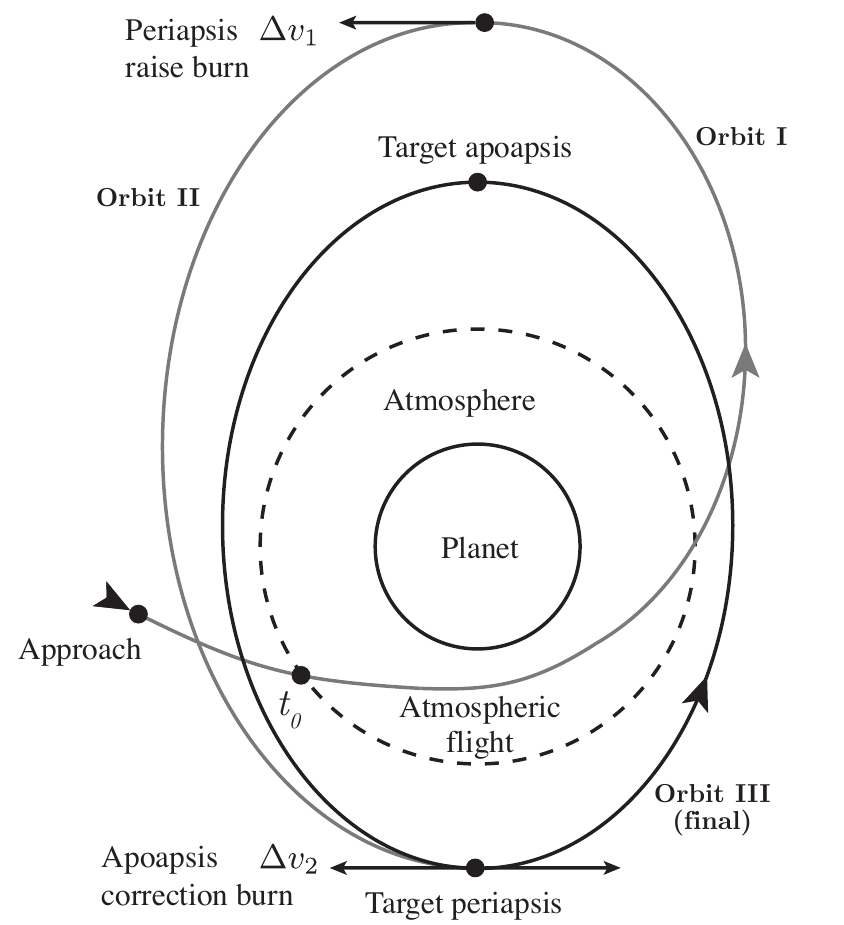}
	\caption{
		Aerocapture mission overview
		\label{fig:aerocapture_mission}}
\end{figure}

Let $r$, $v$, and $\gamma$ be the vehicle radius, planet-relative velocity, and planet-relative flight path angle (FPA).
The apoapsis radius of the orbit following atmospheric flight is a function of the state $x_f = (r_f, v_f, \gamma_f)$ at the final time, given by
\begin{equation}\label{eq:ra_ex_value}
	r_{a,\text{ex}} = a_{\text{ex}} \bigg(1 + \sqrt{ 1 - \frac{r_f^2 v_f^2 \cos ^ 2 \gamma_f}{\mu_{\text{grav}} a_{\text{ex}}}} \bigg),
\end{equation}
where $\mu_{\text{grav}}$ is the planet's gravitational parameter, and where $a_{\text{ex}}$ is the semi-major axis at atmospheric exit given by
\begin{equation}
	a_{\text{ex}} = \frac{\mu_{\text{grav}}}{2 \mu_{\text{grav}} / r_f - v_f ^ 2}.
\end{equation}

Following the atmospheric flight segment, the vehicle coasts to its apoapsis, where it has velocity
\begin{equation}
	v_{a_1}^- = \sqrt{ v_f^2 + 2 \mu_{\text{grav}} \bigg( \frac{1}{r_{a, \text{ex}}} - \frac{1}{r_f} \bigg) }.
\end{equation}
However, the required velocity at the radius $r_{a, \text{ex}}$ for the periapsis to equal to the desired periapsis $r_{p, \text{targ}}$ is
\begin{equation}
	v_{a_1}^+ = \sqrt{ 2 \mu_{\text{grav}} \bigg( \frac{1}{r_{a, \text{ex}}} - \frac{1}{r_{a, \text{ex}} + r_{p,\text{targ}}} \bigg) }.
\end{equation}
The first impulsive maneuver increases the velocity from $v_{a_1}^-$ to $v_{a_1}^+$, and hence
\begin{equation}
	\Delta v_1 = v_{a_1}^+ - v_{a_1}^-.
\end{equation}
Next, the vehicle coasts to the periapsis $r_{p, \text{targ}}$, where it has velocity
\begin{equation}
	v_{p_1}^- = \sqrt{2 \mu_{\text{grav}} \bigg( \frac{1}{r_{p, \text{targ}}} - \frac{1}{r_{a, \text{ex}} + r_{p, \text{targ}}} \bigg) },
\end{equation}
whereas the velocity at this point required for the apoapsis to be equal to the target apoapsis $r_{a, \text{targ}}$ is
\begin{equation}
	v_{p_1}^+ = \sqrt{2 \mu_{\text{grav}} \bigg( \frac{1}{r_{p, \text{targ}}} - \frac{1}{r_{a, \text{targ}} + r_{p, \text{targ}}} \bigg) }.
\end{equation}
The second impulsive maneuver corrects the discrepancy in the velocity at periapsis, and thus
\begin{equation}
	\Delta v_2 = \vert v_{p_1}^+ - v_{p_1}^- \vert.
\end{equation}
The total fuel cost is the sum
\begin{equation} \label{eq:delta_v_value}
	\Delta v = \Delta v_1 + \Delta v_2.
\end{equation}

\subsubsection{Atmospheric Flight}

During atmospheric flight, which is described in Figure~\ref{fig:atmo_flight_coordinates}, a vehicle flying at a trimmed angle of attack can steer by banking the lift vector about the velocity vector, as shown in Figure~\ref{fig:bank_angle_control2_aoa}.
The vertical component in the lift vector is set via the cosine of the bank angle, and the sign of the bank angle is set for lateral control.
In this example, we only consider the longitudinal guidance, and so the control input during atmospheric flight is the bank angle cosine.
For a more general treatment, a separate lateral channel could be added to determine the bank sign \cite{Lu2015aerocapture}, and the covariance function could be modified to be a function of longitude and latitude in addition to altitude.

\begin{figure}
	\centering
	\includegraphics{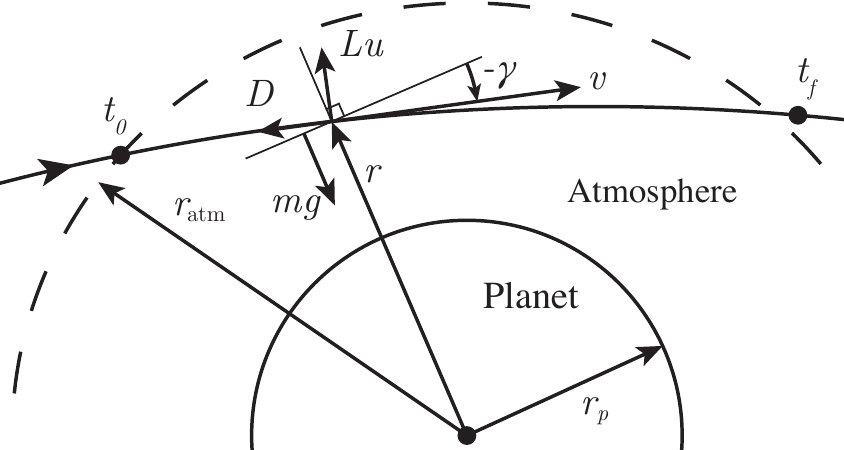}
	\caption{
		Atmospheric flight coordinates with lift $L$ and drag $D$
		\label{fig:atmo_flight_coordinates}}
\end{figure}

\begin{figure}
	\centering
	\includegraphics{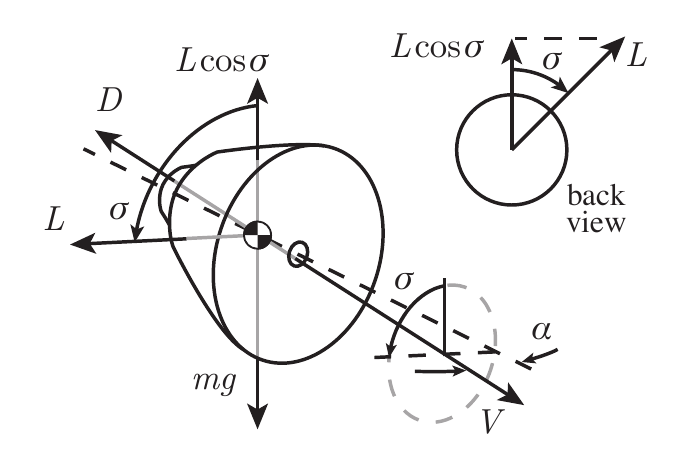}
	\caption{
		Bank angle control with bank angle $\sigma$ and angle of attack $\alpha$
		\label{fig:bank_angle_control2_aoa}}
\end{figure}

The vehicle dynamics during atmospheric flight are described by the system of equations
\begin{subequations} \label{eq:aerocap_eom}
	\begin{align}
		\dot{r} &= v \sin \gamma, \\
		\dot{v} &= - \frac{\rho(r) v^2}{2 B_c} - \frac{\mu_{\text{grav}} \sin \gamma}{r^2}, \\
		\dot{\gamma} &= \frac{\rho(r) v (L / D)}{2 B_c} u - \bigg( \frac{\mu_{\text{grav}}}{r^2} - \frac{v^2}{r} \bigg) \frac{\cos \gamma}{v},
	\end{align}
\end{subequations}
where the input $u$ is the cosine of the bank angle, $\rho$ is atmospheric density, $L/D$ is the lift-to-drag ratio, and $B_c = m / S C_D$ is the spacecraft ballistic coefficient in terms of mass $m$, reference area $S$, and drag coefficient $C_D$.
The ballistic coefficient and lift-to-drag ratio are set to $B_c = 150$ \si{kg/m^2} and $L/D = 0.2$.
The planet is Mars, which is modeled as a sphere of radius \mbox{$r_p = 3397$ \si{km}} and gravitational parameter \mbox{$\mu_{\text{grav}} = \num{4.2828e13}$ \si{m^3/s^2}}.
At the initial time, the state has mean
$\bar{x}_0 = ( \bar{r}_0, \bar{v}_0, \bar{\gamma}_0)$, with $\bar{r}_0 = 125 \, \si{km} + r_p$, $\bar{v}_0 = 6.1 \, \si{km/s}$, and
$\bar{\gamma}_0 = -10.0128 ^ \circ$.
The initial flight path angle is set so that a constant control input $u \equiv 0$ results in the apoapsis after atmospheric exit $r_{a,\text{ex}}$ being equal to the target apoapsis $r_{a,\text{targ}}$.
While the proposed method allows for the initial state to be Gaussian distributed, for this example we set the initial state covariance to be zero so that the effect of the atmospheric disturbances is more clear.

The atmospheric density is given by
\begin{equation}
	\rho = \bar{\rho} (1 + \delta \rho),
\end{equation}
where $\bar{\rho}$ is a known, smooth function describing the nominal density.
The \textit{density variation} $\delta \rho$ is a zero-mean GRF taking values as a function of the altitude $h = r - r_p$, where $r_p$ is the planet radius.
Based on the MarsGRAM atmosphere model \cite{Justus2002}, we define the density variation covariance function as
\begin{equation}
	\covfun (h_1, h_2) = \exp \bigg(-\frac{ \vert h_1 - h_2 \vert }{H_{\text{scale}}} \bigg)
	\times \begin{cases}
		b\big( {\mathrm{min}}(h_1, h_2) \big), & {\mathrm{min}}(h_1, h_2) < h_\text{trans}, 
			\vspace{3pt}\\
		\sigma_{\rho, \text{max}} ^ 2, & {\mathrm{min}}(h_1, h_2) \geq h_\text{trans},
	\end{cases}
\end{equation}
where $H_{\text{scale}}$ is the scale height, and where
\begin{equation}
	b(h) = \sigma_{\rho, \text{max}} ^ 2 \exp \bigg( \frac{h - h_\text{trans}}{c_\text{scale}} \bigg).
\end{equation}
The constants $h_\text{trans}$ and $c_\text{scale}$ determine the scale of the exponential variance model, and $\sigma_{\rho, \text{max}} ^ 2$ is the maximum density variance, which is realized for altitudes $h \geq h_\text{trans}$.
We use the values $H_\text{scale} = 11.1$ \si{km}, $\sigma_{\rho, \text{max}}^2 = 1480$ \si{(kg/m^3)^2}, $h_\text{trans} = 120$ \si{km}, and $c_\text{scale} = 20$ \si{km}.
The nominal density and samples of $\delta \rho$ are shown in Figures~\ref{fig:atmo_nom} and \ref{fig:atmo}.
The nominal density $\bar{\rho}(h)$ is provided by MarsGRAM \cite{Justus2002}.
We remark that while in this example the atmosphere is taken as a function of the altitude, more general models including longitude and latitude dependence could also be used, provided an appropriate covariance function.

Without loss of generality, we let $t_0 = 0$.
The final time is set to $t_f = 400$ \si{s} and \mbox{$\mathscr{P} = (0, 50, 75, \dots, 425, 450, 400)$ \si{s}.}

\begin{figure}
	\centering
	\includegraphics[trim={.0in 0.0in 0.1in 0.0in}, clip]{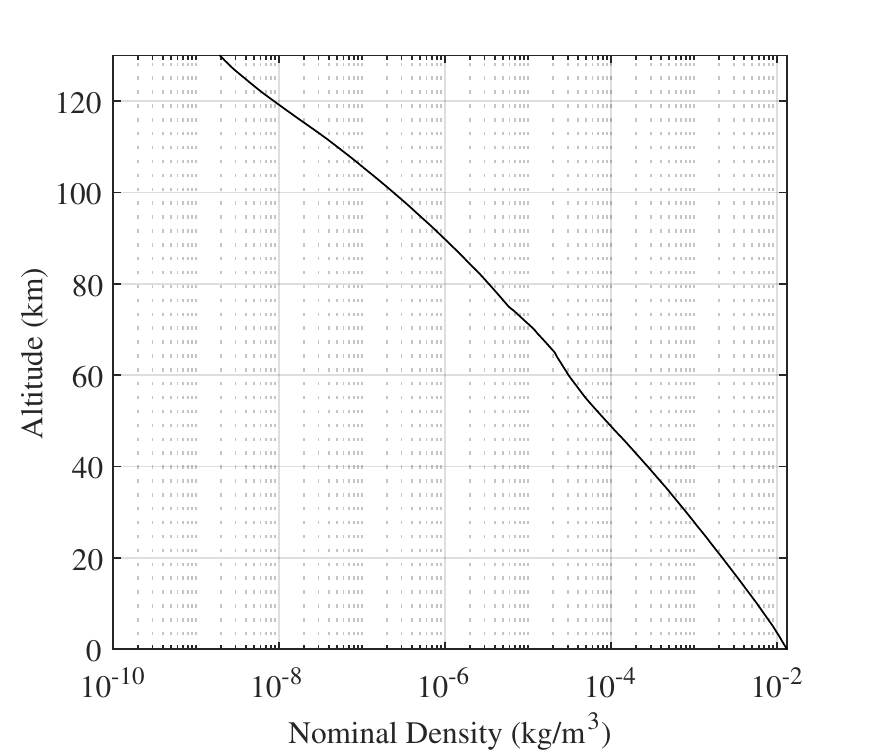}
	\caption{
		Nominal density profile
		\label{fig:atmo_nom}}
\end{figure}

\begin{figure}
	\centering
	\includegraphics[trim={.0in 0.0in 0.1in 0.0in}, clip]{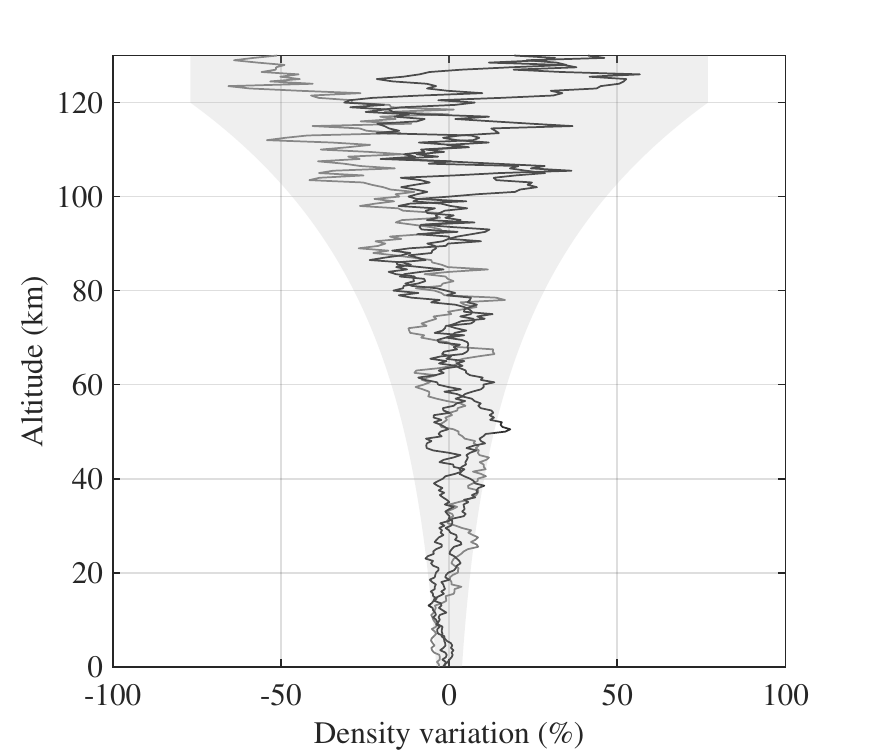}
	\caption{
		Samples from the density variation process with the $2\sigma$ confidence interval shaded
		\label{fig:atmo}}
\end{figure}



\subsubsection{Feedback Control Design}

The bank angle control during atmospheric flight is determined to minimize the $\Delta v$ required to reach the target orbit apoapsis $r_{a, \text{targ}} = 5 r_p$ and periapsis $r_{p, \text{targ}} = 2 r_p$.
Since in the stochastic setting $\Delta v$ is a random variable, we are able to explicitly minimize the 99\textsuperscript{th} percentile of the total $\Delta v$ cost, rather than simply minimizing the expected $\Delta v$ cost.
To this end, we approximate $\Delta v$ from (\ref{eq:delta_v_value}) to first order as
\begin{equation}
	\Delta v (x_f) \approx \Delta v(\hat{x}_f) + \frac{\partial \Delta v}{\partial x_f} \bigg\vert_{\hat{x}_f} (x_f - \hat{x}_f),
\end{equation}
where $\hat{x}_f$ is obtained from integrating the nominal dynamics (\ref{eq:nominal_system}), and set
\begin{equation} \label{eq:xi_for_dv}
	\xi \t = \frac{\partial \Delta v}{\partial x_f}.
\end{equation}
The cost $J_2$ as in (\ref{eq:J_2_def}) with $\xi$ as in (\ref{eq:xi_for_dv}) and with $p_f = 0.1$ is thus approximately equal to the  99\textsuperscript{th} percentile of $\Delta v$.
Since, in this case, the final state mean and covariance are included in the cost function, we do not enforce the final state constraints (\ref{eq:final_state_distribution_constraints}).


Leveraging the subadditivity of probability, we constrain the probability that $u_k \in [-1, +1]$ to be at least $0.9973$ by enforcing the constraints
\begin{equation}
	\Pr (u_k \leq 1) \geq 1 - p^u / 2, \;\; \text{and} \;\; \Pr (u_k \geq -1) \geq 1 - p^u / 2,
\end{equation}
for $p^u = 1 - 0.9973$, and for $k = 0, \dots, N - 1$.

The desired trajectory is set as $x^d_k = \bar{x}_k$ so that the state-error penalty $Q_k$ penalizes the running state covariance.
In particular, we penalize variations in the dynamic pressure $q = \rho v^2/2$, since excessive deviation from the nominal lift and drag forces will invalidate the linear approximation of the dynamics.
We thus set
\begin{equation} \label{eq:aerocap_Qk}
	Q_k = \hat{q} ^{-2} \bigg( \frac{\partial q}{\partial x} \bigg) \t \bigg( \frac{\partial q}{\partial x} \bigg),
\end{equation}
where $\hat{q}$ is the dynamic pressure along the nominal trajectory, and where the terms on the right-hand side of (\ref{eq:aerocap_Qk}) are evaluated at $\hat{x}_k$.
The running control weights are set to $R_k = 2 \times 10 ^ {-2}$ and $\bar{R}_k = 0$ for each step $k$, and $\eta = 1$.
The first and final time steps are taken to be longer than the intermediate steps to improve computational performance, since smaller time steps in these regions was observed to not be beneficial.

Finally, the change in the mean control for each iteration was limited as in (\ref{eq:control_trust}) with $\Delta_u = 0.1$ and $M_k^u = 1$; the change in the mean final state was constrained as in (\ref{eq:state_trust}) with $\Delta_x = 0.1 r_p$, $M^x_{N} = (\partial r_{a,\text{ex}}/ \partial x)\t (\partial r_{a,\text{ex}}/ \partial x)$, and $M^x_k = 0$ for $k = 1, \dots, N - 1$.

%
\subsubsection{Results}

Algorithm~\ref{algo:ics} was run for three iterations, starting with the initial control guess $\hat{u}_k = 0$ for all $k = 0, \dots, N - 1$.
The number of iterations was fixed, and the termination of the algorithm was determined by user feedback.
The nominal aerocapture trajectory resulting from both the initial guess and from the final nominal control are shown in Figure~\ref{fig:ac_nominal}.
The resulting probability distributions of $\Delta v$ following each iteration, including the initial open-loop guess, were computed by both the linear covariance approximation and by a 5,000 trial Monte Carlo, and are plotted in Figure~\ref{fig:dv_pdf_iter}.
First, we note that the linear covariance approximation (plotted as a PDF) reasonably approximates the empirical distribution (shown as a histogram) obtained from Monte Carlo.
One source of error between the linear covariance and the Monte Carlo distributions follows from the absolute value in the $\Delta v$ cost corresponding to the apoapsis cleanup burn.
Regardless, as shown by Figure~\ref{fig:dv_pdf_iter}, the linear covariance approximation serves as a useful surrogate for the optimization.
Despite, for example, the mismatch of the linear covariance probability density in Figure~\ref{fig:dv_pdf_iter}(c), the Monte Carlo distribution is consistently shifted and shaped in each iteration to have a lower upper percentile cost.
Using the final control law, the 99\textsuperscript{th} percentile of $\Delta v$ from the 5,000 Monte Carlo trails was 314 \si{m/s}, whereas to the open loop 99\textsuperscript{th} percentile was 717 \si{m/s}.

Next, consider the control inputs for each iteration, shown in Figure~\ref{fig:control_iter}.
With progressive iterations, the nominal control tends to increase the vertical lift in the first part of the trajectory while decreasing the lift in the final part of the trajectory.
Around the maximum dynamic pressure, which occurs nominally at 147 \si{s}, the nominal vertical lift is set to almost zero by the final iteration, which allows for the feedback control to have higher variance while ensuring that the control remains between $\pm1$ with high probability.

\begin{figure}
	\centering
	\includegraphics[trim={0in 0.15in 0.1in 0.25in}, clip]{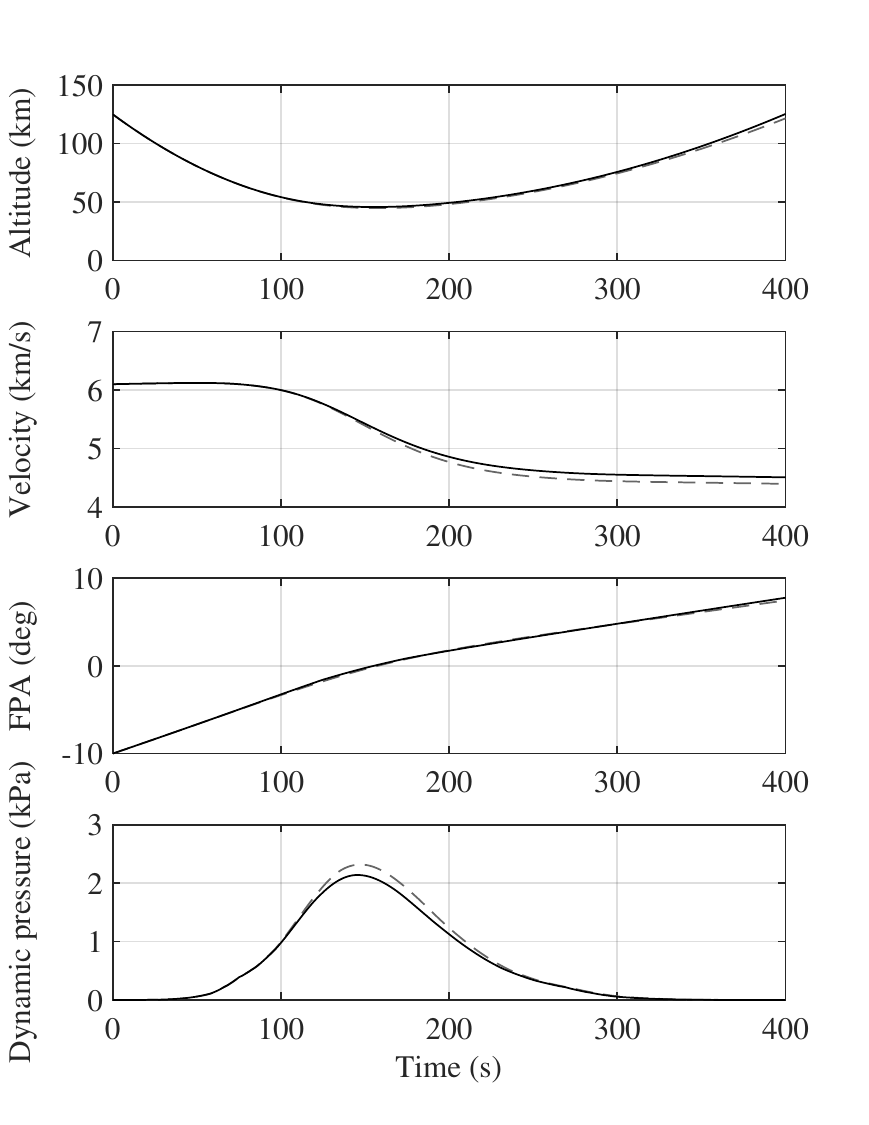}
	\caption{
		Nominal aerocapture trajectories for the initial control guess (dashed) and the final iteration (solid).
		\label{fig:ac_nominal}}
\end{figure}

Sample Monte Carlo state error and control trajectories from the initial open-loop guess and from the final optimized closed-loop trajectory are shown in Figures~\ref{fig:guess_traj} and \ref{fig:soln_traj}.
In both of these figures, the $\pm3\sigma$ bounds on the states computed from Monte Carlo are shown as dashed lines; in the control plots the dashed lines show the control limits, and stars mark the mean and $\pm3\sigma$ bounds on the control computed from Monte Carlo.
Note that the control inputs were saturated, despite the $\pm3\sigma$ limits lying slightly outside the input bounds.
At each discrete time step $t_k$, the $\pm3\sigma$ bounds computed by the linear covariance approximation are shown by error bars.
Interestingly, the linear covariance approximation is more accurate for the guess trajectory shown in Figure~\ref{fig:guess_traj}, similar to the improved accuracy of the $\Delta v$ approximation for the guess trajectory in Figure~\ref{fig:dv_pdf_iter}.
The decreased approximation accuracy along the optimized trajectory is a consequence of the closed-loop control, since, as shown in Figure~\ref{fig:dynp_compare}, deviations in dynamic pressure from the nominal trajectory increase in response to corrective controls.

Finally, we return to the approximation of the density statistics by evaluating the covariance function along the nominal trajectory, which is the fundamental assumption used to represent the spatially-defined density uncertainty as a temporal random process.
Samples of the density variation along Monte Carlo sample trajectories are plotted in Figure~\ref{fig:delrho_cov_cmp} along with $\pm3\sigma$ bounds computed by both Monte Carlo and from the approximate covariance function $\hat{\covfun}(t, t)$ as in (\ref{eq:GRF_hat}).
The close agreement between the Monte Carlo covariance and the approximate covariance $\hat{\covfun}$ suggests that, at least for the present aerocapture problem, taking the random field statistics to be a function of time along the nominal trajectory is a good approximation.

\begin{figure*}
	\centerfloat
	\includegraphics[trim={.5in 0.0in 0.5in 0.0in}, clip]{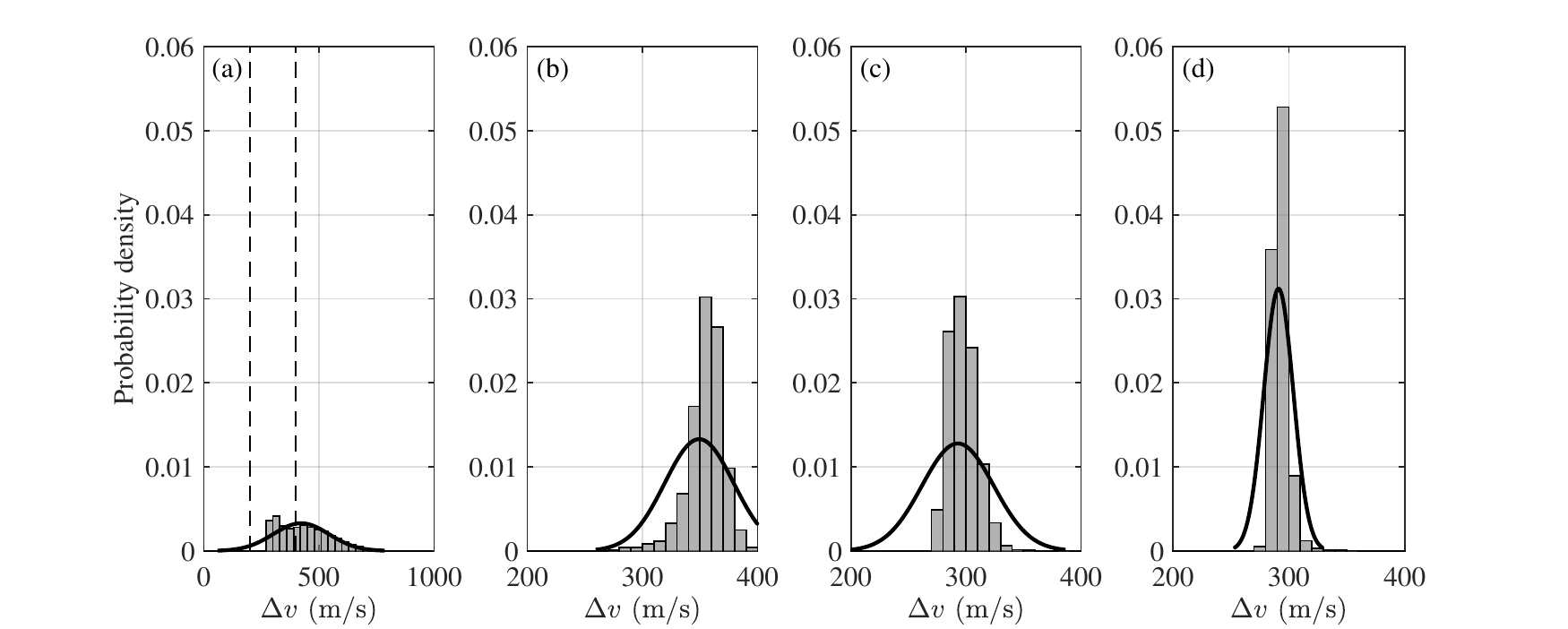}
	\caption{
		$\Delta v$ probability density for each iteration of Algorithm~\ref{algo:ics}.
		Note that plots (b--d) only show from 200 to 400 \si{m/s}, which is the interval between the dashed lines in plot (a).
		\label{fig:dv_pdf_iter}}
\end{figure*}

\begin{figure*}
	\centerfloat
	\includegraphics[trim={.5in 0.0in 0.5in 0.1in}, clip]{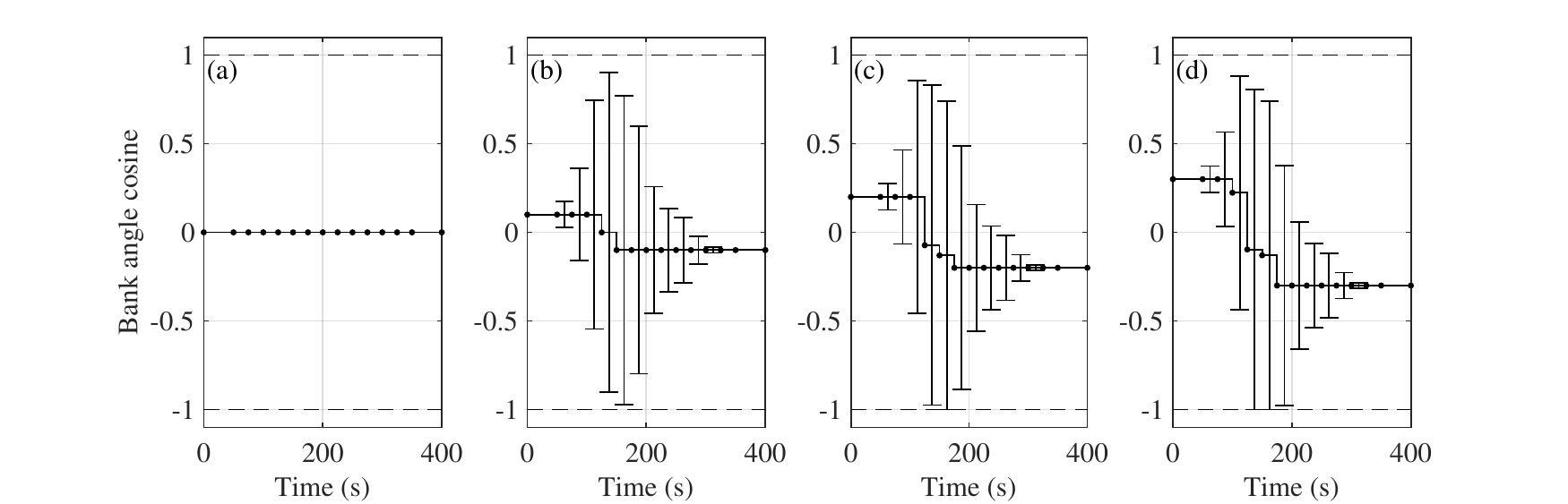}
	\caption{
		Control inputs with $\pm 3\sigma$ confidence intervals computed from linear covariance for each iteration of Algorithm~\ref{algo:ics}.
		\label{fig:control_iter}}
\end{figure*}

\begin{figure}
    \centering
    \includegraphics[trim={0in 0.45in 0.1in 0.45in}, clip]{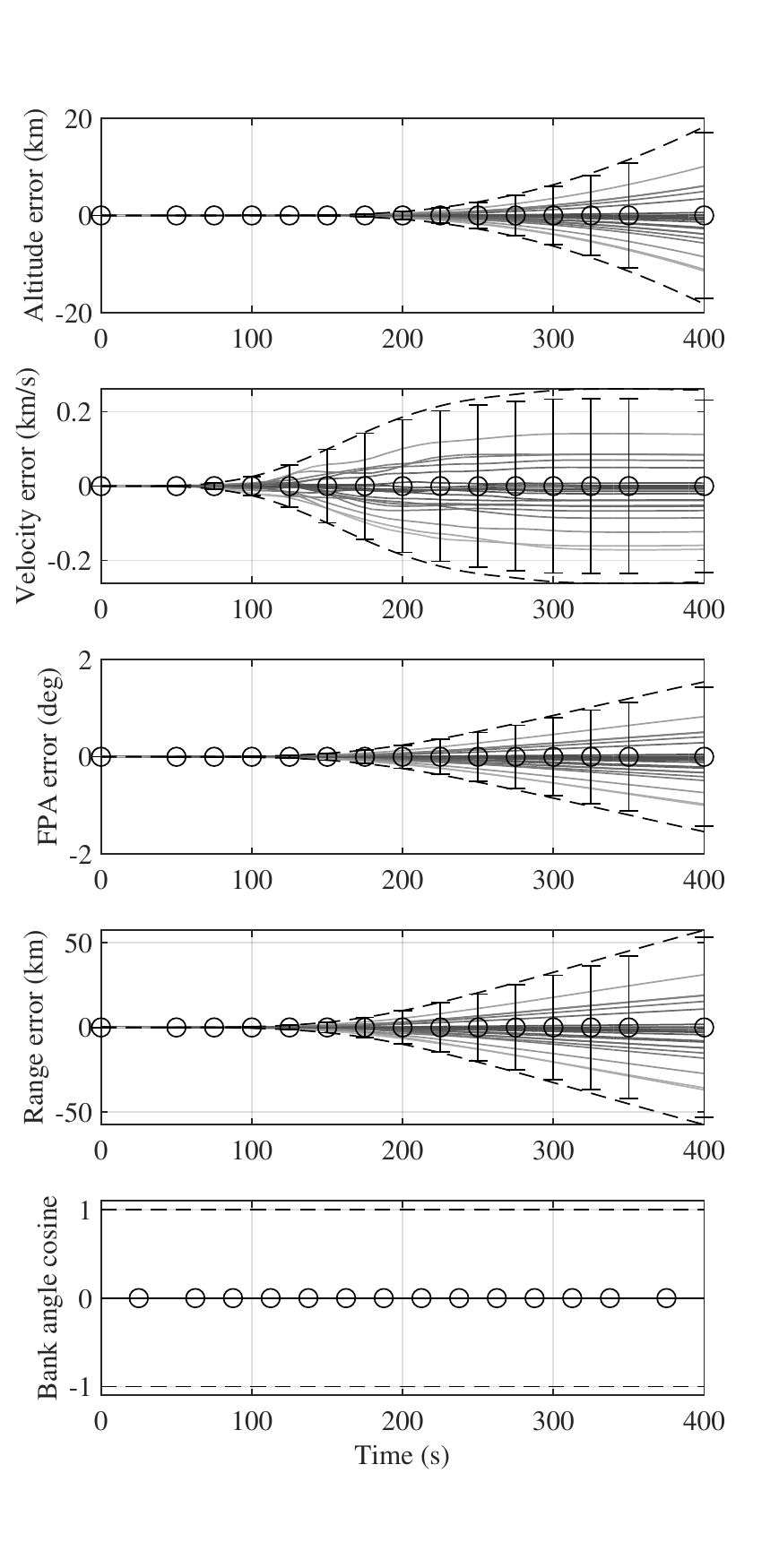}
    \caption{State error trajectories and control inputs for the initial open-loop trajectory}
    \label{fig:guess_traj}
\end{figure}

\begin{figure}
    \centering
    \includegraphics[trim={0in 0.45in 0.1in 0.45in}, clip]{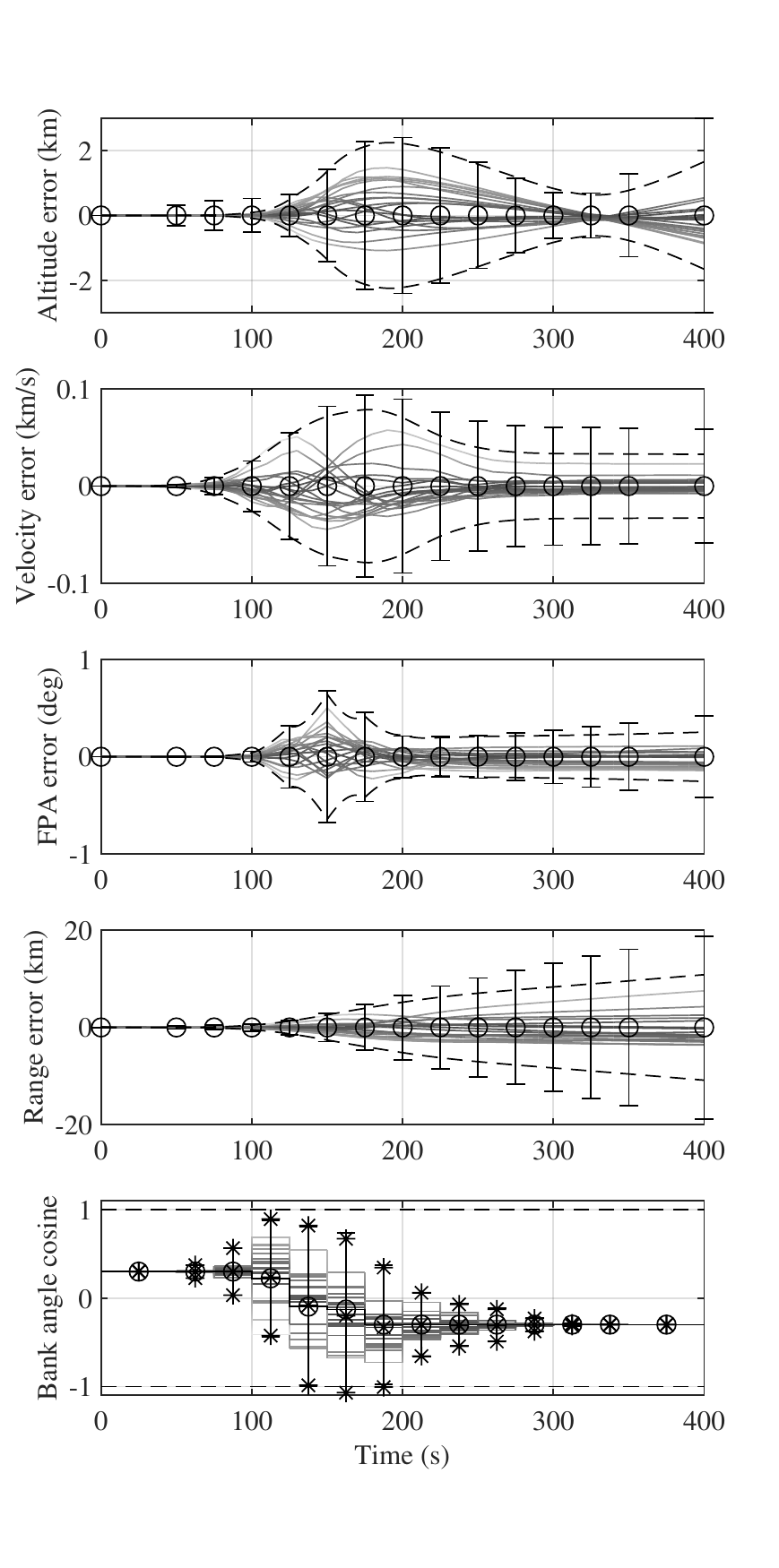}
    \caption{State error trajectories and control inputs for the optimized closed-loop trajectory}
    \label{fig:soln_traj}
\end{figure}

\begin{figure}
    \centering
    \includegraphics[trim={0in 0.0in 0.1in 0.0in}, clip]{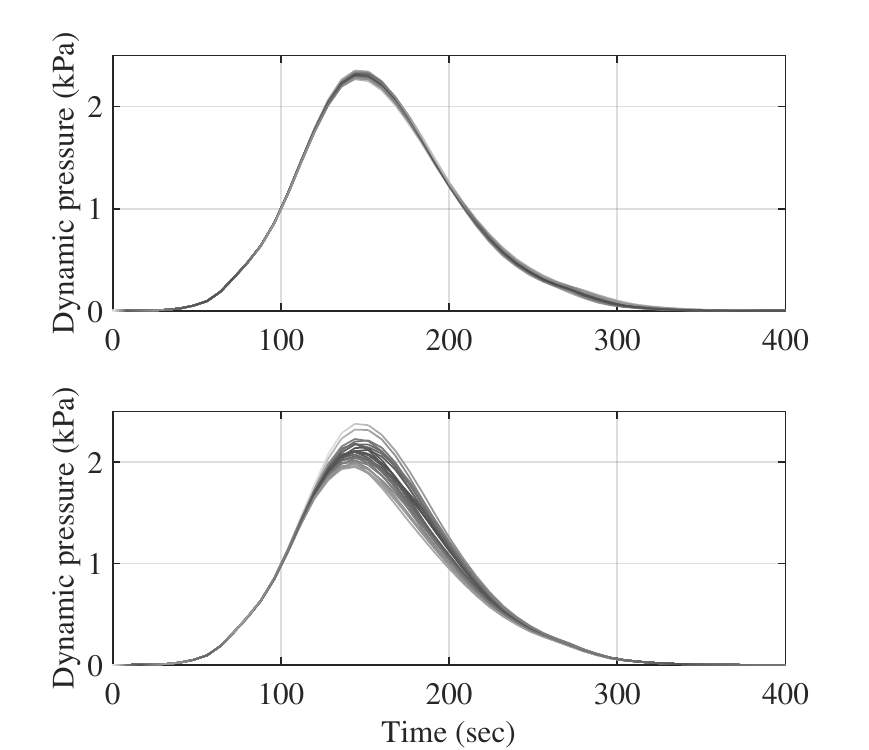}
    \caption{Dynamic pressure Monte Carlo sample trajectories from the initial open-loop trajectory (top) and from the optimized closed-loop trajectory (below)}
    \label{fig:dynp_compare}
\end{figure}

\begin{figure}
    \centering
    \includegraphics[trim={0in 0.0in 0.1in 0.0in}, clip]{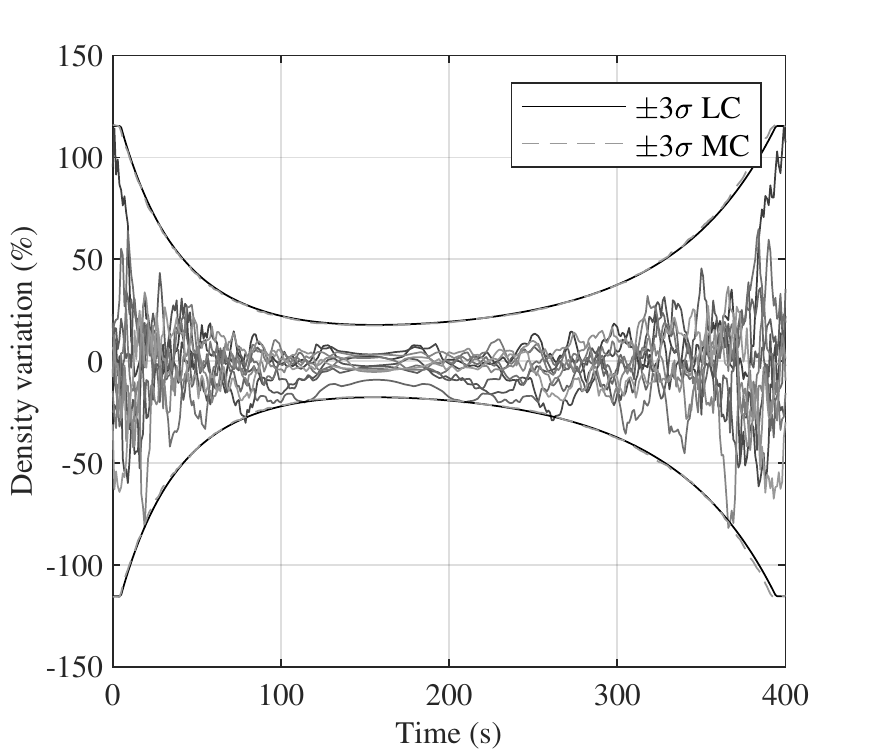}
    \caption{Density variation values along Monte Carlo trajectories with $\pm3\sigma$ bounds computed from Monte Carlo (MC) and from the linear covariance (LC) approximation}
    \label{fig:delrho_cov_cmp}
\end{figure}

\section{Conclusion}
\label{sec:conclusion}

In this paper, a method is presented for chance-constrained stochastic control of systems subjected to a spatially-dependent uncertainty modeled as a GRF.
Along a fixed nominal trajectory, spatially-dependent uncertainty becomes time-dependent, and accordingly, spatial correlations are approximated as temporal correlations.
An integral equation is derived to compute the temporal correlations of random disturbances on a dynamical system due to a GRF.
Following a linear approximation of the system dynamics, the joint optimization of the nominal and feedback controls is derived as a convex program.
The solution to the original stochastic optimal control problem is obtained by successively performing convex optimization with respect to the linearized system.
The proposed method was demonstrated on both a simple double integrator example and on a realistic aerocapture problem.
In future work, the proposed method can be applied to problems with more sophisticated disturbance models, such as aerocapture or hypersonic vehicle guidance with a three-dimensional atmosphere model.

\section*{Funding Sources}

This work was supported by NASA Space Technology Research Fellowship award 80NSSC17K0093.

\bibliography{gpcs21}

\end{document}